\def\ps@headings{%
\def\@oddhead{\mbox{}\scriptsize\rightmark \hfil \thepage}%
\def\@evenhead{\scriptsize\thepage \hfil \leftmark\mbox{}}%
\def\@oddfoot{}%
\def\@evenfoot{}}
\newcommand{\pr}[1]{\mathrm{Pr}\left(#1\right)} 
\newcommand{\ev}[1]{\mathbb{E}\{#1\}}  
\newcommand{\gf}[1]{\mathds{GF}\left(#1\right)} 
\newcommand{\myvar}[1]{\mathds{V}\!\mathrm{ar}\left\{#1\right\}}  
\newcommand{\eps}{\epsilon}
\renewcommand{\varepsilon}{\epsilon}
\newcommand{\mc}[1]{\mathcal{#1}}
\newcommand{\TRASH}[1]{}
\providecommand{\eqref}[1]{(\ref{#1})}
\newcommand{\remove}[1]{}
\newcommand{\sd}{\mathbf{SD}}
\newtheorem{theorem}{Theorem}
\newtheorem{proposition}{Proposition}
\newtheorem{lemma}{Lemma}
\newtheorem{definition}{Definition}
\newtheorem{remark}{Remark}
\newtheorem{corollary}{Corollary}
\begin{document}



\title{A Capacity-achieving One-message Key Agreement With Finite Blocklength Analysis}

\author{%
  \IEEEauthorblockN{Setareh Sharifian,
                   Alireza Poostindouz, and
                    Reihaneh Safavi-Naini}
  \IEEEauthorblockA{University of Calgary\\
                    AB, Canada}
}


\maketitle

%
%




\begin{abstract}
Information-theoretic secret key agreement (SKA) protocols are a fundamental cryptographic primitive that 
are used to establish a shared secret key between two or more parties. 
In  a  two-party SKA in source model,  
Alice and Bob have samples of two correlated variables, that are partially leaked to Eve, and their goal is to establish a shared secret key  by communicating over a   reliable public channel. Eve must have no information about the established key.
%
%
%
In this paper, we study the problem of one-message secret key agreement where  the key is established by Alice sending a single message to Bob.
%
We propose a one-message SKA (OM-SKA) protocol,  prove that it achieves the one-way secret key capacity, and derive finite blocklength 
approximations of the 
achievable 
secret key length. 
%
%
%
We compare our results with existing 
OM-SKAs and show the protocol has a unique combination of desirable properties.
\end{abstract}

\section{Introduction}
Key agreement is a fundamental problem in cryptography: Alice wants to share a secret key with  Bob  which  will be completely unknown to Eve.
In this paper, we consider information theoretic 
 secret key agreement (SKA) that uses 
 physical layer assumptions 
 to achieve security.
 Wyner \cite{Wyner1975} pioneered the use of physical layer properties, in his case noise in the channel, for secure 
 message transmission.  The approach has found significant attention because of its application to many wireless communication settings, and 
 its information theoretic security which provides security even if a quantum computer exists.
 Information theoretic secret key agreement was first  proposed  
 by Maurer, and Ahlswede and Csisz{\'{a}}r independently \cite{Ahlswede1993,Maurer1993}.
 In the so-called 
 \textit{source model} of \cite{Ahlswede1993}, Alice and Bob  have 
  samples of correlated random variables (RVs) $X$ and $Y$, and want 
    to agree on a secret key by exchanging messages 
 over a public  (authenticated and error free) channel that is visible to the eavesdropper Eve, who  has 
an initial 
side information 
$Z$ about the correlated variables. 
The three 
 variables  $X,Y$, and $Z$ have 
 a  joint distribution $P_{XYZ}$ that is public.

 
In this paper, we study  {\em one-message secret-key agreement}  (OM-SKA) where the key is established by Alice sending a single message to Bob.
The problem is   first studied by \cite{Holenstein2006} and  \cite{holenstein2006strengthening},  and later, OM-SKA constructions based on polar codes where proposed in \cite{Renes2013} and \cite{Chou2015a}.
The problem is important in practice because it avoids interaction between Alice and Bob that would require stateful protocols with vulnerabilities in implementation.  It is also interesting theoretically due to \cite{Holenstein2006} as is related to circuit polarization and immunization of public-key encryption, or it can be used for oblivious transfer \cite{wullschleger2007oblivious}.

\remove{

 (e.g. theoretically as where   
only one-way communication from Alice to Bob is allowed and (ii) Alice is only allowed to send \textit{one message} to Bob for the purpose of key agreement. This problem was first explored in \cite{Holenstein2006} and  \cite{holenstein2006strengthening}. 
One-message key agreement protocols are preferred over interactive alternatives, specially by small devices with power constrains, since cost of interaction for secret key agreement can become excessive for conventional key sizes. Moreover, when physical properties of the environment are varying (e.g. in Ad hoc networks), the environment may change during the interaction or transmission of more than one message. 
As a result, the initial physical assumptions may not hold and the security is not guaranteed anymore.   OM-SKA has other interesting applications in  cryptography namely for circuit polarization and immunization of public-key encryption which has been proposed and explored by Holenstein and Renner \cite{Holenstein2006}. It  also has applications for oblivious transfer as noted in \cite{wullschleger2007oblivious}. Study of these applications is beyond the scope of this paper.
}

Efficiency of an SKA is measured by the length $\ell$ of the established secret key. 
When  Alice,  Bob and Eve have $n$  independent samples  
of their random variables, denoted by $(X^n, Y^n,Z^n) $, the {\em rate of the secret key} is given by $\ell/n$. 
The {\em secret key capacity}  associated to a distribution $P_{XYZ}$  is  the highest achievable  key rate when ${n\to\infty}$.

\remove{
One important performance measure of a source model in which Alice,  Bob and Eve have their respective  $n$ samples of the random variables, $(X^n, Y^n,Z^n) $, is in terms of 
the highest achievable  key rate (achievable key length divided by $n$) 
when ${n\to\infty}$,  
i.e., the {\em secret key capacity} of the  model. 
}
%
%
%
The secret key (SK) capacity $C_s(X,Y|Z)$ of 
 a general  distribution is 
 a long-standing open problem. 
When the three variables  form a Markov Chain ${X-Y-Z}$,
   the secret key capacity  
   is given by,  
	$C_s(X,Y|Z) = I(X;Y|Z)$  \cite{Ahlswede1993}. 

The secret key capacity of a general distribution when the public communication channel is one-way (i.e., only Alice sending to Bob),
is called {\em  the one-way secret key (OW-SK) capacity}, and is denoted by $C_s^{ow}(X,Y|Z)$.
This capacity, by definition, is  a lower bound to the SK capacity, that is  $C_s^{ow}(X,Y|Z)\leq C_s(X,Y|Z)$ \cite{Ahlswede1993}.

%

In a real life deployment of SKA protocols, $n$,  the number of available initial samples to each party, is finite and the efficiency of a protocol is measured by $\ell$. 
In  \cite{Hayashi2016}, bounds on secret key  length in  finite blocklength regime are established using higher order approximations.
The given lower bound of \cite{Hayashi2016}, however, cannot be used for OM-SKA because the proposed SKA protocol is interactive and uses many ($\mc O(n)$) rounds of public communication. 

The first explicit constructions of 
OM-SKA protocols are given 
in \cite{Holenstein2006,holenstein2006strengthening}.   
The construction of \cite{Holenstein2006} uses  
the concatenation of a random linear code with a Reed-Solomon code to allow Bob to recover the variable of Alice, and use that to derive a shared key.
The protocol achieves the OW-SK capacity.
There is no explicit finite blocklength analysis of this construction in \cite{Holenstein2006}, although as shown in Proposition~\ref{prop:1}, the work of \cite{holenstein2006strengthening} can be extended to obtain a lower bound on the key length.
 Renes, Renner and Sutter 
  \cite{Renes2013},  and Chou, Bloch and Abbe 
  \cite{Chou2015a} proposed constructions of OM-SKA  using polar codes. These constructions are capacity achieving, but  their finite blocklength analysis, which is not discussed in \cite{Renes2013,Chou2015a}, will be directly related to the finite blocklength analysis of the underlying polar codes.
  \remove{
  do not have finite length
   Achievable finite-length lower bounds are not given for these protocols and can be subject to independent future works. 
}

\vspace{1mm} 
\noindent
{\bf  Our Work.}
We propose a OM-SKA protocol, analyze its security and  
derive a finite blocklength lower bound for the key length.
The construction (and hence the bound) holds for the more general case that  the $n$ samples that are held by the parties are independent, but  could be drawn from different distributions (this was called ``independent experiments'' in \cite{Renner2005}).
The construction achieves the OW-SK capacity and provides a finite blocklength lower bound for one-way SKA.
We compare the finite blocklength lower bound of the protocol with the  only other lower bounds that we derived  based on the constructions in \cite{holenstein2006strengthening}, 
and show its superior performance.
In particular, as we illustrate in a numerical example, the latter bound of \cite{Holenstein2006} becomes  positive only when $n$, the number of samples, is larger than a high threshold ($\sim 10^{9}$ bits), while our bound is positive for values of $n$ starting from $10^{3}$ bits (See Figure~\ref{fig:comp}).

\remove{
This bound shows that our SKA protocol can achieve the one-way SK capacity. For IID distributions, our analysis of the SKA protocol gives a second more accurate (tighter)  finite blocklength lower bound.  
}
An important property of  our OM-SKA protocol is  that it gives an explicit finite blocklength upper bound on the required  public communication  for achieving the OW-SK capacity. The bound only depends on $n$, $P_{XYZ}$, and key's secrecy and reliability properties. 
The full comparison of our protocol with all the known OM-SKA is given in Table~\ref{tab:comp}.


\vspace{1mm} 
{\noindent 
\bf Related Works.} 
Maurer \cite{Maurer1993} and Ahlswede and Csisz{\'{a}}r \cite{Ahlswede1993} initiated the  study of information theoretic secret key agreement, and  derived lower and upper bounds on the secret key capacity of the source model.  
There has been many follow up works, most notably for our study, deriving  upper and lower bounds for the secret key length with  matching (i.e., optimum) second order approximations \cite{Hayashi2016},  assuming the Markov chain ${X-Y-Z}$ holds.
 
\remove{
The study of information theoretic key agreement is initiated by Maurer \cite{Maurer1993} and Ahlswede and CsiszÃ¡r \cite{Ahlswede1993} who also derived the initial bounds for secret key rate. These bounds were improved later in \cite{maurer1999unconditionally} and \cite{renner2003new}. Finding  single-shot and finite-length   
upper and lower 
bounds for the secret key length 
has found significant attention (\cite[Theorem 1]{Maurer1993} gives one of the first bounds which is improved later in \cite[Theorem 3]{Renner2005} and \cite{Tyagi2014}). For SKA when the Markov chain ${X-Y-Z}$ holds, Hayashi et al.  \cite{Hayashi2016} give a tight finite-length lower bound (that matches a finite-length upper bound up to the second order term in $\mc O(\sqrt{n})$) using an interactive protocol.  

%

Information theoretic key agreement protocols have 
two main subprotocols: {\em reconciliation} and {\em privacy amplification}  (also known as {\em key extraction}).  Both these steps have 
been widely studied in \cite{minsky2003set, Bennett1995,Renner2005}.
}

%

One-way secret key (OW-SK) capacity was introduced by Ahlswede and Csisz{\'{a}}r \cite{Ahlswede1993}, who derived an expression to calculate 
 OW-SK capacity. 
Holenstein and Renner  \cite{Holenstein2006} considered one-message  SKA protocols 
and gave constructions that achieve  OW-SK capacity. 
Their constructions 
has computational complexity of $\mc O(n^2)$ .  
In \cite{Renes2013} and \cite{Chou2015a}, two capacity achieving OM-SKA constructions using polar codes 
are given. Although, these SKA protocols require specific construction of a polar code (i.e., computation of the index sets) for the underlying distribution,  and can benefit from progress in polar code construction. These codes do not provide finite blocklength bounds for the key length. 

\remove{
The one-way secret key (OW-SK) capacity was first introduced in \cite{Ahlswede1993}, and an expression was proved for the calculation of OW-SK capacity. In \cite{Holenstein2006,Renes2013,Chou2015a} one-message  SKA protocols are given that can achieve the OW-SK capacity. In \cite{Holenstein2006} and later in \cite{holenstein2006strengthening} Holenstein and Renner  give two explicit OM-SKA protocols that can achieve the OW-SK Capacity. Finite-length lower bounds can be derived from their analysis of their SKA protocols. Their constructions are considered \emph{efficient}, as they  have polynomial complexity of $\mc O(n^2)$, but we show that their finite-length bounds are not tight (see Section~\ref{sec:comparison}).  In \cite{Renes2013} and \cite{Chou2015a} polar coding techniques were considered for SKA and these SKA protocols can also achieve the OW-SK capacity. 

The advantage of using polar codes, is that the implementation of polar codes is proved to be \emph{``practically efficient''}, i.e., in $\mc O(n\log n)$ \cite{Renes2013}.  However, by using polar codes, parties initially have to construct the code (that includes computation of index sets). 
The computation complexity of this step can become high \cite[Section III.C]{Renes2013}, which can render the SKA protocol not even efficient.  Moreover, the SKA analysis of  \cite{Renes2013,Chou2015a} does not give any finite-length bounds for the length of the final key. 
We will review these results and compare them with our results.  
}

\vspace{1mm} 
{\noindent 
\bf Organization.}
The background is given in Section~\ref{sec: pre} and in Section~\ref{sec:Source_Model} we discuss about the OW-SK Capacity. Our main results are in Section~\ref{sec:ska}.
We conclude the paper in Section~\ref{sec:comparison} by comparing our results with previous works.

\section{Background}\label{sec: pre}

	\subsection{Notations and definitions}
	
	 We denote random variables (RVs) with upper-case letters, (e.g., $X$), and their realizations with lower-case letters, (e.g., $x$). 
	 Calligraphic 
	 letters to show the alphabet  of a random variable (e.g., $\mc X$). 
	  The probability mass function (p.m.f) of an RV 
	   $X$ is denoted by $P_X(x) = \pr{X=x}$.

	Shannon entropy of an RV $X$ over the alphabet  $\mathcal{X}$ is, 
	 $H(X) = - \sum_{{x}\in \mathcal{X}}P_X({x}) \log P_X({x})$.  
	 \remove{
	 The binary entropy is $h_2(p)=-p\log p -(1-p)\log(1-p)$. The  min-entropy  of $X$  is given by:
	$
	H_{\infty}(X)= -\log (\displaystyle \max    P_X({x}) ).
	$ Logarithms 
	 are in base 2. 
The joint  and conditional  distributions of two RVs $X$ and $Y$ 
are, $P_{XY}$ and 
$P_{X|Y}$, respectively. The {\em conditional entropy}  
is  defined as $
		H(X|Y)=- \mathds{E}_{P_{XY}} \log P_{X|Y}({X}|{Y}),$
	and the 
	}
	The average conditional min-entropy is 
	$\tilde{H}_\infty(X|Y) = -\log \mathds{E}_{P_Y}  \max_{x\in\mc{X}}P_{X|Y}(x|Y)$ \cite{Dodis2008}.
The maximum number of extractable random bits from a random variable is given by its \textit{smooth min-entropy}
\cite{Renner2004} defined as
$
H_{\infty}^{\epsilon}(X)= \displaystyle \max_{Y:SD(X,Y)\leq \epsilon} H_{\infty}(Y)
$.

The \emph{mutual information} between $X$ and $Y$ is $I(X;Y)= 
H(X)-H(X|Y).$ 
The conditional mutual information between $X$ and $Y$ given $Z$ is
 $I(X;Y|Z) = 
 H(X|Z)-H(X|Y,Z).
$

If for RVs $(X,Y,Z)$ the Markov relation $X-Y-Z$ holds, i.e., $P_{XYZ}(x,y,z)P_Y(y) = P_{XY}(x,y)P_{YZ}(y,z)$ then, 
$I(X;Y|Z) = H(X|Z) - H(X|Y).$




	%
The {\em statistical distance}  between  two RVs $X$ and $Y$ with 
p.m.f's $P_X$ and $P_Y$, defined over 
a common alphabet $\mc W$, is 
given by, $\sd(X;Y) = \frac{1}{2} \sum_{w\in \mc W} \left\vert P_X(w) - P_Y(w) \right\vert.$
%

\remove{
A distribution $P_{W^n}$ is IID if we have $P_{W^n}=\Pi_j P_{W_j}$ and $P_{W_j}=P_W~\forall j$. Given the expected value and variance of a distribution, \textit{Berry-Essen inequality} gives an accurate estimate of the probability of the corresponding IID distribution. We use this inequality to make our achievability results more accurate in Section~\ref{sec:ska}.

\begin{lemma}
    [Berry-Esseen Inequality \cite{berry1941accuracy,esseen1945fourier}] 
    \label{lem:Berry-Esseen}
Let $P_{W^n}$ be an IID distribution, then for any $-\infty<\alpha<\infty$
\begin{equation*}
    \left\vert \pr{\sum_{j=1}^{n} W_j \leq n\mu - \alpha\sqrt{\Delta n}} - Q(\alpha) \right\vert \leq \frac{3\rho}{\Delta^{3/2} \sqrt{n}} , 
\end{equation*}
where $\mu=\ev{W}, \Delta=\myvar{W}, \rho=\ev{|W-\mu|^3}$, and $Q(\cdot)$ is the tail probability of the standard Gaussian distribution.
\end{lemma}

}

	\remove{
	\subsection{Secret Key Agreement} 
    
The goal of  a secret key agreement (SKA) protocol is establishing a shared secret  key (SK) among designated participants, which is perfectly secret from others. SKA in {\em source model} considers a setting where protocol participants have samples of a  source of randomness, and use a public authenticated channel to  arrive at a shared key.  Interactive protocols in this setting,  in general, have two phases. During {\em reconciliation},  participants arrive at common string, and in the {\em privacy amplification} phase, extract a common random key from their shared string. We consider a two party key agreement protocols.  Let $K_1$ and $K_2$ denote the keys that are derived by the two participants, at the end of the protocol, and $F$ denote the set of messages that is sent over the public channel,   during the protocol.

	\begin{definition}
		A random variable $K$ with alphabet $\mc{K}$ is 
		 an $(\epsilon,\sigma)$-Secret Key (in short $(\epsilon,\sigma)$-SK)  for a given setting,
		  if there exists a public communication protocol $\boldsymbol F$, and functions $K_1,\ldots,K_m$, with the reliability and security properties defined as below:
		\begin{align}
			\text{(reliability)}\quad& \pr{K_1=\cdots=K_m=K}\geq 1-\epsilon, \\
			\text{(security)}\quad& \sd\left((K,{\boldsymbol{F}} ,Z);(U,{\boldsymbol{F}},Z)\right) \leq \sigma,
		\end{align}
		where $U$ denotes the uniform probability distribution over alphabet $\mc{K}$.
	\end{definition}
In this paper we will focus on the problem of generating secret keys among two terminals with  maximum possible length in the source model setting. The details of the source model setting for SKA is explained Section~\ref{sec:Source_Model}.
	} 
     \begin{lemma}
    [Berry-Esseen Inequality \cite{berry1941accuracy,esseen1945fourier}] 
    \label{lem:Berry-Esseen}
Let $P_{W^n}$ be an IID distribution, then for any $-\infty<\alpha<\infty$
\begin{equation*}
    \left\vert \pr{\sum_{j=1}^{n} W_j \leq n\mu - \alpha\sqrt{\Delta n}} - Q(\alpha) \right\vert \leq \frac{3\rho}{\Delta^{3/2} \sqrt{n}} , 
\end{equation*}
where $\mu=\ev{W}, \Delta=\myvar{W}, \rho=\ev{|W-\mu|^3}$, and $Q(\cdot)$ is the tail probability of the standard Gaussian distribution.
    \end{lemma}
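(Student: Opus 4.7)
The statement is the classical Berry-Esseen theorem, with the explicit constant $3$, so any proof will be essentially the standard Fourier-analytic argument of Esseen. My plan is to reduce the tail probability to a CDF comparison, attack that comparison via characteristic functions, and close the loop with Esseen's smoothing inequality.

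First I would reduce to a normalized sum. Set $T_j = (W_j-\mu)/\sqrt{\Delta}$, so that the $T_j$'s are i.i.d., mean $0$, variance $1$, and $\Ex|T|^3 = \rho/\Delta^{3/2}$. The event $\sum_j W_j \le n\mu - \alpha\sqrt{\Delta n}$ is equivalent to $S_n := \frac{1}{\sqrt{n}}\sum_j T_j \le -\alpha$, so the claim becomes $\sup_\alpha \bigl|\Pr(S_n\le -\alpha)-Q(\alpha)\bigr| \le 3\Ex|T|^3/\sqrt{n}$, which is the usual statement of Berry-Esseen.

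Second I would control the characteristic functions. Let $\varphi(t)=\Ex\,e^{\ii t T}$ and $\psi(t)=e^{-t^2/2}$. Using Taylor expansion with the integral form of the remainder, and exploiting $\Ex T = 0$, $\Ex T^2=1$, one obtains $|\varphi(t)-1+t^2/2|\le \Ex|T|^3\,|t|^3/6$, and consequently a pointwise bound of the form $|\varphi(t/\sqrt{n})^n - \psi(t)|\le C\,\Ex|T|^3\,|t|^3 e^{-t^2/3}/\sqrt{n}$ valid on $|t|\le c\sqrt{n}/\Ex|T|^3$. The technical care here is the standard one: write $\varphi(t/\sqrt{n})^n = \exp(n\log\varphi(t/\sqrt{n}))$, expand the logarithm, and absorb higher-order error into the decaying Gaussian factor.

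Third I would invoke Esseen's smoothing lemma, which states that for any CDFs $F,G$ with $G$ having a bounded density $\|G'\|_\infty \le m$ and characteristic functions $f,g$,
\begin{equation*}
\sup_x |F(x)-G(x)| \le \frac{1}{\pi}\int_{-T}^{T}\left|\frac{f(t)-g(t)}{t}\right|\,dt + \frac{24 m}{\pi T}.
\end{equation*}
Applying this with $F$ the CDF of $S_n$, $G=\Phi$, $m=1/\sqrt{2\pi}$, and $T$ of order $\sqrt{n}/\Ex|T|^3$, and inserting the pointwise estimate from the previous step, yields the desired bound of the form $C\,\Ex|T|^3/\sqrt{n} = C\rho/(\Delta^{3/2}\sqrt{n})$. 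The genuinely delicate step — and the one I expect to be the main obstacle — is sharpening $C$ down to $3$: this requires optimizing $T$, splitting the integral into a small-$|t|$ region where the Taylor remainder dominates and a large-$|t|$ region where one uses $|\varphi(t)|\le e^{-ct^2}$ for $|t|$ small enough together with a crude $|\varphi|\le 1$ bound, and then checking the resulting arithmetic, as done by Esseen in his original argument.
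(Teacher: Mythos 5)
The paper does not prove this lemma: it is imported verbatim as the classical Berry--Esseen theorem with a citation to Berry and Esseen, so there is no internal proof to compare against. Your sketch --- normalization to a mean-zero, unit-variance sum, a third-moment Taylor bound on the characteristic function, and Esseen's smoothing inequality --- is precisely the argument of the cited sources (the constant $3$ for the i.i.d.\ case is obtained exactly this way, e.g.\ in Feller's treatment), and the only genuinely delicate point is the constant-chasing you already flag, so the proposal is sound and matches the classical route.
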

\subsection{Universal Hash Functions}\label{subsec:hash}

\begin{definition}[\text{2-Universal Hash Family (2-UHF)\cite{Carter1977a}}]
\label{def: UHF}
A family $\{h_s|s\in\mathcal{S}\}$ of functions $h_s:\mathcal{X}\rightarrow\mathcal{Y}$ is a 2-Universal Hash Family if for any $x\neq x^{'}$,
$
\mathrm{Pr}\{h_S(x)= h_S(x^{'})\}\leq \frac{1}{|\mathcal{Y}|},
$
where the probability is on  the uniform choices over $\mathcal{S}$. 
\end{definition}

\remove{
{An example of a 2-UHF 
is the {\em multiplicative hash family} ${\mathcal{H}_{mult}^{n,t}:} \{0,1\}^n\times\{0,1\}^n \to \{0,1\}^t$ that, on seed $  s \in \{0,1\}^n\setminus0^n$ and input $x^n \in \{0,1\}^n$, outputs the first $t$ bits of $x^n\odot   s$. Here, $\odot$ is multiplication over $\gf {2^n}$. This hash function is denoted as $H_{  s}^{n,t}( {x})=x^n \odot   s|_t$.}
}
The \textit{Leftover Hash Lemma} (LHL) \cite{Impagliazzo1989a} states that 2-UHFs extract randomness from a source with a lower bound on its min-entropy \cite{Impagliazzo1989a}.
The 
average-case version of LHL is given in \cite{Dodis2008}.
\remove{
\begin{lemma}\label{lem: ALHL} Let $\{h_s|s\in \mc{S}\}$  be a 2-UHF with  $h_s:\{0,1\}^n\rightarrow\{0,1\}^\ell$. Let $X$ and $Z$ be random variables over $\{0,1\}^n$ and $\{0,1\}^{*}$, respectively with the joint distribution $P_{XZ}$.  Then 
$$
\sd((S,Z,h_S(X));(S,Z,U_\ell))\leq \frac{1}{2}\sqrt{2^{\ell-\tilde{H}_\infty(X|Z)}},
$$
where $U_\ell$ is the uniform probability distribution over $\{0,1\}^\ell$.
\end{lemma}
}

\subsection{SKA in Source Model}

An Information-theoretic key agreement protocol, has 
 two main steps \cite{helleseth1993secret}:  
 {\em information reconciliation} where the goal is to arrive at
a common string, and 
 {\em privacy amplification}  where the goal is to extract a secret key from the shared string. Sometimes an initiation phase is included in the protocol during which
  protocol parameters  and public values are  determined.
The following definitions for information-theoretic SKA are consistent with the corresponding ones in \cite{Holenstein2006}.
        \begin{definition}[See \cite{Holenstein2006,Tyagi2014,Hayashi2016}]\label{def:seckey}
            Let Alice and Bob be two parties with inputs $X$ and $Y$ and local independent randomness sources $U_A$, $U_B$ respectively. A random variable $K $ over  $\mc{K}$  is an $(\epsilon,\sigma)$-Secret Key (in short $(\epsilon,\sigma)$-SK), if there exists a  protocol with public communication $\boldsymbol F$, and two functions $K_A(X,U_A,\boldsymbol F)$ and $K_B(Y,U_B,\boldsymbol F)$,  
            that satisfy the following reliability and security properties: 
            \begin{align}
                \text{(reliability)}\quad& \pr{K_A=K_B=K}\geq 1-\epsilon, \\
                \text{(security)}\quad& \sd\left((K,{\boldsymbol{F}} ,Z);(U,{\boldsymbol{F}},Z)\right) \leq \sigma,\label{def:sec}
	        \end{align}
	        \noindent where $U$ is sampled  uniformly from $\mc{K}$, and $Z$ is a random variable corresponding to 
	        Eve's side information. 
        \end{definition}

       Efficiency of an SKA protocol is in terms of the secret key length that is obtained for a given set of variables.

    \begin{definition}[\cite{Hayashi2016}]
		For a given source model $(X,Y,Z)$ with joint distribution $P_{XYZ}$, and  pair of reliability and secrecy parameters $(\epsilon,\sigma) \in [0,1)^2$, 
		 the highest achievable key length is denoted by  $S_{\epsilon,\sigma}(X,Y|Z)$, and  is defined
		by the supremum of the  key length $\log |\mc{K}|$  for all $(\epsilon,\sigma)$-SKA protocols $\Pi:$ 
	\begin{align*}
			S_{\epsilon,\sigma}(X,Y|Z) = \sup\limits_{\Pi}  \{ \log|\mc{K}| \mid \text{$K$ is $(\eps,\sigma)$-SK for $(X,Y,Z)$}\}.
		\end{align*}
	\end{definition}

\section{One-way Secret Key Capacity}
\label{sec:Source_Model}

Ahlswede and Csisz{\'{a}}r \cite{Ahlswede1993}  
derived the ``forward key capacity'' (or what we call the one-way secret key capacity) of the source model.
 %
Let  $X^n = (X_1,\ldots,X_n)$, $Y^n = (Y_1,\ldots,Y_n)$   and $Z^n = (Z_1,\ldots,Z_n)$ denote $n$ IID samples of the distribution $P_{XYZ}$ that is publicly known, and consider a protocol family indexed by $n$, achieving a secret key  of  length $\ell(n)$.
   
The {\em  secret key rate} of  the protocol, indexed by $n$, is given by $R(n)= \ell (n)/n $, and the achievable rate of the family is given
by $R^*=\liminf_{n \rightarrow \infty} \ell  (n)/n.$  The secret key \textit{capacity} is the supremum of the achievable key rate 
of  all protocol families for the same setting. The following definition follows the definitions in \cite{Hayashi2016}.

	   \remove{
	    the protocol family 
	   is given by,  $R(n)= \ell (n)/n $, 
	 and the (asymptotically) achievable  secret key rate of the protocol family 
	     is  given by 
	 $R^*=\liminf_{n \rightarrow \infty} \ell  (n)/n.$
	
	 The secret key \textit{capacity} is supremum over all achievable secret key rates.
	 }
\remove{    
	\begin{definition}{\color{black}}
		For any given source model $( X, Y, Z)$ with IID distribution 
		$P_{X^n Y^n Z^n}$, the secret key capacity is defined by
		\begin{equation}
			C_s( X, Y| Z) =  \sup_{\Pi_n} \liminf_{n \rightarrow \infty} \frac{1}{n} S_{\epsilon_n,\sigma_n}(X^n,Y^n|Z^n),
		\end{equation}
		where the supremum is over all protocols $\Pi_n$ with reliability and secrecy parameters $\epsilon_n,\sigma_n$, such that $\lim\limits_{n \rightarrow \infty} (\epsilon_n+\sigma_n) =0.$
	\end{definition}

	In general, the secret key capacity is not known for any arbitrary source distribution \cite{ElGamal2011}. 
}

\begin{definition}[One-way Secret Key Capacity
\footnote{In some works including \cite{holenstein2006strengthening,Renner2005,Renes2013} the term ``capacity'' is reserved for 
 physical channels,  and for 
 source model only ``key rate'' is used.}]
		For a 
		source model  with distribution 
		$P_{X^n Y^n Z^n}$, the secret key capacity $C_s^{ow}( X, Y| Z)$ is defined by
		\begin{equation}
			C_s^{ow}( X, Y| Z) =  \sup_{\Pi_n} \liminf_{n \rightarrow \infty} \frac{1}{n} S^{ow}_{\epsilon_n,\sigma_n}(X^n,Y^n|Z^n),
		\end{equation}
		where the supremum is over all protocols $\Pi_n$ with reliability and secrecy parameters $\epsilon_n,\sigma_n$, such that $\lim\limits_{n \rightarrow \infty} (\epsilon_n+\sigma_n) =0$ and $S^{ow}_{\epsilon_n,\sigma_n}(X^n,Y^n|Z^n)$ is the corresponding key length from the OW-SKA protocol ${\Pi_n}$.
\end{definition}

It was  
 shown  \cite[Theorem 1]{Ahlswede1993} that  
 one-way secret key capacity is
given by the supremum
of $H(U|Z,V)-H(U|Y,V)$, where the supremum is over all distributions $P_{UV}$ satisfying $V-U-X-YZ$. 
This result also follows from \cite[Corollary 2]{Csiszar1978}. 
Holenstein and Renner \cite[Theorem 3]{Holenstein2006} proved  
the supremum can be taken  over all distributions $P_{UV|X}$ satisfying $X-U-V$, and that  the supremum
can always be achieved by a OM-SKA protocol  \cite[Theorem 3.3]{holenstein2006strengthening}, and
so we have the following theorem for  OW-SKA  capacity.

\remove{

It was 
 shown in \cite[Theorem 1]{Ahlswede1993} that for one-way communication in the described source model, the one-way secret key capacity is
given by the supremum
of $H(U|Z,V)-H(U|Y,V)$, where the supremum is over all distributions $P_{UV}$ satisfying $U-V-X-YZ$. This result also follows from \cite[Corollary 2]{Csiszar1978}. 
Holenstein and Renner \cite[Theorem 3]{Holenstein2006} proved that same results hold when the supremum is taken over all distributions $P_{UV}$ satisfying $X-U-V$. Holenstein in \cite[Theorem 3.3]{holenstein2006strengthening} showed that supremum
can always be achieved by a OM-SKA protocol and is finite. Therefore, we have the following theorem for the one-way secret key capacity.

}

	\begin{theorem} [Theorem 1 of \cite{Holenstein2006}]
	    For any given source model $( X, Y, Z)$ with IID distribution 
		$P_{X^n Y^n Z^n}$, the OW-SK capacity is given by
		\begin{equation}\label{eq:OW-SK-Cap}
		    C_s^{ow}(X,Y|Z) = \max_{P_{UV|X}} H(U|Z,V) - H(U|Y,V),
		\end{equation}
		where optimization is over joint distributions  $P_{UV|X}$'s such that $X-U-V$ holds.
	\end{theorem}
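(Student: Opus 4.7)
The proof splits into a converse and an achievability part, and the bulk of the work is folding the Ahlswede--Csisz\'ar characterization into the cleaner Markov structure $X-U-V$.

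For the converse, my plan is to invoke the single-letter upper bound of \cite[Theorem 1]{Ahlswede1993}, which shows that $C_s^{ow}(X,Y|Z)$ is at most the supremum of $H(U|Z,V)-H(U|Y,V)$ taken over all auxiliary variables $(U,V)$ satisfying $V-U-X-YZ$. I then need to argue that this supremum equals the one taken over the weaker Markov condition $X-U-V$. The direction ``sup over $V-U-X-YZ$ is at most sup over $X-U-V$'' is immediate, since every admissible pair for the former is admissible for the latter. For the reverse direction, given $(U,V)$ with $X-U-V$, I would construct a new pair $(U',V')$ by passing $V$ through a kernel depending only on $U$, so that the conditional distribution of $(Y,Z)$ given $(U,V)$ coincides with that given $U$; because $Y,Z$ depend on $U$ only through $X$ and $X-U-V$ holds, the joint distribution and hence the objective $H(U|Z,V)-H(U|Y,V)$ are preserved. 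This is essentially the reduction used by Holenstein--Renner \cite[Theorem 3]{Holenstein2006}.

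For achievability I would give a OM-SKA construction realizing any feasible triple $(U,V)$ with $X-U-V$. Fix such $P_{UV|X}$. Alice (i) samples $V^n$ i.i.d.\ from $P_V$ independently of $X^n$ and transmits it publicly; (ii) generates $U^n$ by passing $(X^n,V^n)$ through the conditional kernel $P_{U|XV}$ induced by $P_{UV|X}$; (iii) performs Slepian--Wolf-type binning of $U^n$ at rate slightly above $H(U|YV)$ so that Bob, from $(Y^n,V^n)$ and the bin index $F$, can reconstruct $U^n$ with vanishing error by joint-typicality decoding; (iv) applies a universal$_2$ hash $h_S(U^n)$ with seed $S$ to extract a key of length $\ell \approx n\bigl[H(U|ZV)-H(U|YV)\bigr]$. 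The entire public transmission $(V^n,F,S)$ is a single message. Reliability follows from the Slepian--Wolf decoding error bound; secrecy follows from the average-case leftover hash lemma applied to $U^n$ conditioned on Eve's view $(Z^n,V^n,F,S)$, since the smooth conditional min-entropy $\tilde{H}_\infty^{\epsilon}(U^n|Z^nV^nF)$ is at least $nH(U|ZV) - |F| - o(n)$ by chain-rule and AEP arguments. Maximizing over $P_{UV|X}$ gives the claimed capacity; that the supremum is attained is \cite[Theorem~3.3]{holenstein2006strengthening}.

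\textbf{Main obstacle.} The delicate point is isolating secrecy and reconciliation from the same single transmission: the reconciliation bin index $F$ is visible to Eve and must not destroy too much of $U^n$'s min-entropy relative to $(Z^n,V^n)$. I would handle this by a chain-rule bound $\tilde{H}_\infty^{\epsilon}(U^n|Z^nV^nF) \geq \tilde{H}_\infty^{\epsilon}(U^n|Z^nV^n) - \log|F|$ and then estimating $\tilde{H}_\infty^{\epsilon}(U^n|Z^nV^n) \approx nH(U|ZV)$ via the smoothed equipartition property, so that after subtracting the reconciliation rate $H(U|YV)$ the net extractable randomness is exactly the OW-SK capacity expression. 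The rest (key-length, statistical-distance, and rate bookkeeping) is routine and, in the finite-blocklength form, is what the paper then sharpens using the Berry--Esseen inequality.
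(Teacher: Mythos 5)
First, note that the paper does not prove this theorem at all: it is imported verbatim from prior work, with the converse attributed to \cite[Theorem 1]{Ahlswede1993} (and \cite[Corollary 2]{Csiszar1978}), the reformulation of the Markov constraint from $V-U-X-YZ$ to $X-U-V$ attributed to \cite[Theorem 3]{Holenstein2006}, and the attainability of the supremum by a one-message protocol to \cite[Theorem 3.3]{holenstein2006strengthening}. Your overall architecture (Ahlswede--Csisz\'ar converse, Markov-condition reduction, achievability by publicly revealing $V^n$, binning $U^n$ at rate about $H(U|YV)$, and hashing out a key of length about $n[H(U|ZV)-H(U|YV)]$) is the standard proof and is also essentially the argument the paper itself gives later, in its corollary that $\mathbf{\Pi}_{\mathrm{SKA}}$ achieves the OW-SK capacity.

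There is, however, one concrete error in your achievability step. You have Alice sample $V^n$ i.i.d.\ from $P_V$ \emph{independently of} $X^n$ and then generate $U^n$ via $P_{U|XV}$. Under the constraint $X-U-V$ the variables $X$ and $V$ are conditionally independent given $U$ but are \emph{not} marginally independent in general, so the distribution $P_X P_V P_{U|XV}$ you simulate is not the target $P_{XUV}=P_X P_{U|X}P_{V|U}$, and the entropies $H(U|ZV)$, $H(U|YV)$ you subsequently invoke are evaluated under the wrong joint law. In effect your construction only certifies the weaker rate $\max_{P_{U|X}} H(U|Z)-H(U|Y)$, since an independent $V$ contributes nothing. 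The fix is the one the paper uses in its capacity corollary: Alice generates $U^n$ from $X^n$ via $P_{U|X}$, then $V^n$ from $U^n$ via $P_{V|U}$, and broadcasts $V^n$; Bob decodes $U^n$ from $(Y^n,V^n)$ and Eve's view is $(Z^n,V^n)$ plus the public message. With that correction, the rest of your plan --- the chain-rule bound $\tilde{H}_\infty^{\epsilon}(U^n|Z^nV^nF)\geq \tilde{H}_\infty^{\epsilon}(U^n|Z^nV^n)-\log|F|$, the AEP estimate of the smooth min-entropy, and the leftover-hash step --- is sound and matches the mechanism of Theorem~\ref{main} and Lemma~\ref{smooth-hash}.
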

	Finding 
	 explicit solution to \eqref{eq:OW-SK-Cap} in general is not known. 
	 For some source models, the optimizing $P_{UV|X}$ can be analytically calculated \cite{Holenstein2006,holenstein2006strengthening}, and 
	 be used to construct OM-SKA 
	 that achieves the OW-SK capacity \cite{Holenstein2006}.  
	
	\begin{corollary}[Corollary 4 of \cite{Renes2013}]\label{cor:lessnoisy}
	    For a source model with IID distribution $P_{X^n Y^n Z^n}$ such that for any RV $U$ satisfying $U-X-(Y,Z)$, we have $I(U;Y)\geq I(U;Z)$,
	    \begin{equation}
	        C_s^{ow}(X,Y|Z) = H(X|Z) - H(X|Y).
	    \end{equation}
	\end{corollary}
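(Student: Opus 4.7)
The plan is to invoke the preceding theorem and evaluate the maximum in \eqref{eq:OW-SK-Cap} under the less-noisy hypothesis. For the achievability direction, I will choose $U = X$ together with a constant $V$. The Markov chain $X - U - V$ then holds trivially, and the objective collapses to $H(U|Z,V) - H(U|Y,V) = H(X|Z) - H(X|Y)$. Applying the hypothesis to the choice $U := X$ (which satisfies $U - X - (Y,Z)$ trivially) gives $I(X;Y) \geq I(X;Z)$, so this quantity is non-negative and therefore a legitimate achievable value.

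For the converse, fix an arbitrary admissible pair $(U,V)$ with $X - U - V$. Under the natural joint $P_{UVXYZ} = P_{UV|X}\,P_{XYZ}$, we obtain the long Markov chain $V - U - X - (Y,Z)$ and in particular $(U,V) - X - (Y,Z)$. I will then rewrite the objective as $I(U;Y|V) - I(U;Z|V)$ and apply the chain-rule identities
\begin{align*}
I(X;Y) &= I(V;Y) + I(U;Y|V) + I(X;Y|U,V),\\
I(X;Z) &= I(V;Z) + I(U;Z|V) + I(X;Z|U,V),
\end{align*}
each a consequence of $I(X,U,V;Y) = I(X;Y)$ (respectively for $Z$), which follows from $(U,V) - X - Y$ (respectively $(U,V) - X - Z$). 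Subtracting and rearranging yields
\[
[I(X;Y)-I(X;Z)] - [I(U;Y|V)-I(U;Z|V)] = [I(V;Y)-I(V;Z)] + [I(X;Y|U,V)-I(X;Z|U,V)].
\]

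It then remains to show that both bracketed terms on the right are non-negative. For the first, $V - X - (Y,Z)$ holds (as a sub-chain of $V - U - X - (Y,Z)$), so the hypothesis applied to $V$ gives $I(V;Y) \geq I(V;Z)$. For the second, I condition on each realization $(U,V) = (u,v)$: the conditional joint of $(X,Y,Z)$ still feeds $X$ through the original channel $P_{YZ|X}$, only with a new $X$-marginal $P_{X|u}$ in place of $P_X$. Applying the hypothesis with $X$ itself in the role of the auxiliary random variable then gives $I(X;Y|u,v) \geq I(X;Z|u,v)$, and averaging over $(u,v)$ preserves the inequality. The main obstacle here is precisely this last step: the hypothesis is stated relative to the given $P_{XYZ}$, yet we invoke it on the conditional marginal $P_{X|u}$. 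This is justified because less-noisy depends only on the channel pair $(P_{Y|X}, P_{Z|X})$ and not on the input distribution, i.e., it is the standard Körner-Marton characterization and extends to arbitrary marginals on $X$. Combining the two non-negative brackets yields $H(U|Z,V) - H(U|Y,V) \leq H(X|Z) - H(X|Y)$ for every admissible $(U,V)$, matching the achievability choice and completing the proof.
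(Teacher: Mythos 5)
The paper does not actually prove this corollary: it is imported verbatim as Corollary~4 of \cite{Renes2013}, so there is no internal argument to compare yours against. Your proof is the standard one for evaluating \eqref{eq:OW-SK-Cap} under a less-noisy hypothesis: achievability by taking $U=X$ and $V$ constant, and a converse via the chain-rule decomposition of $I(X;Y)-I(X;Z)$ using $(U,V)-X-(Y,Z)$. The identity you derive is correct, the rewriting of the objective as $I(U;Y|V)-I(U;Z|V)$ is correct, and the first bracket is handled exactly right: $V$ satisfies $V-X-(Y,Z)$ \emph{with the original marginal} $P_X$, so the hypothesis applies to it as literally stated.

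The step that is not fully closed is the one you yourself flag: $I(X;Y|U,V)\geq I(X;Z|U,V)$. Conditioning on $(U,V)=(u,v)$ replaces the input marginal $P_X$ by $P_{X|u,v}$, and you then invoke the hypothesis for that tilted marginal. Your justification --- that less noisy is a property of the channel pair $(P_{Y|X},P_{Z|X})$ alone --- is true for the \emph{standard} definition of less noisy, which quantifies over all joints $P_{UX}$ and hence over all input distributions. But the corollary's hypothesis, read literally, quantifies only over $P_{U|X}$ with the source marginal $P_X$ fixed; writing $f(Q)$ for $I(X;Y)-I(X;Z)$ computed with $X\sim Q$ through the fixed $P_{YZ|X}$, that literal hypothesis says only that $f(P_X)\geq \sum_u p_u f(Q_u)$ for every decomposition $P_X=\sum_u p_u Q_u$, i.e.\ concavity of $f$ at the single point $P_X$ in the Jensen sense. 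This does not formally deliver $f(P_{X|u,v})\geq 0$ at the conditionals, which is what your second bracket needs. Under the standard channel-ordering reading of the hypothesis --- which is surely what \cite{Renes2013} intends --- your proof is complete; under the literal fixed-marginal reading, the step requires either an argument that the fixed-marginal condition self-improves to all $Q\ll P_X$, or a restatement of the hypothesis. Since the paper offers no proof, this is as much a remark on the statement as on your argument.
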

	
	A special case of Corollary~\ref{cor:lessnoisy} is when the Markov chain  $X-Y-Z$ holds. In this case, the OW-SK capacity is equal to  $C_s^{ow}(X,Y|Z) = H(X|Z) - H(X|Y) = I(X;Y|Z)$. 


\section{$\mathbf{\Pi}_{\mathrm{SKA}}$: A One-message SKA Protocol}\label{sec:ska}

Consider the source model setting, 
    and assume Alice, Bob and Eve have 
    their corresponding    $n$ components of the source $(X^n,Y^n, Z^n)$. 
     Let    the  required secrecy and reliability parameters of the key be $\sigma$ and $\eps$, respectively.
           Alice and Bob choose  
   two 2-UHFs 
    $h_s:\mc X^n\to \{0,1\}^t$ and $\hat{h}_{s'}:\mc X^n\to \{0,1\}^\ell$,
    and share (over public channel)  two  uniformly random seeds $s\in\mc S$ and $s'\in\mc S'$ for the two families.
  (In the following we show how the values of 
    $t$ and $\ell$  will be  determined.)
    In the rest of this paper we use the  multiplicative hash family (See Section~\ref{sec: pre}) for the 2-UHF.
      
   Our SKA protocol ($\mathbf{\Pi}_{\mathrm{SKA}}$) works as follows:
	Alice uses $h_s(\cdot)$  
	to compute the hash value of 
	her sample vector $x^n$, and sends it to Bob;  Bob uses the received hash value, his sample vector $y^n$,
	 and the known probability distribution of the source,  
	to recover Alice's sample  vector {\em reconciliation}).    
The main idea behind the reconciliation technique of Protocol 1, used by Bob, is  to  divide  
 the ??range of values (spectrum) ?? of $P_{X^n|Y^n=y^n}$ into two parts, and search in  only the main part to find Alice's vector.
This  reduces search complexity at the cost of increased error.
By choosing  an appropriate value for $t$, Bob can bound the reconciliation error to $\eps$.
The transmitted hash value will also be used in conjunction with their vector $z^n$ to learn about the key, and so longer key hash values (reduced error probability) will result in shorter keys.
	 Alice and Bob will estimate the total  leaked information about their common strings, and
	 remove  it 
	 during the \textit{privacy amplification}  (key extraction) phase  that is implemented by using  a
	  second 2-UHF hash function $\hat{h}_{s'}.$

    The security and reliability of the protocol, and     the choice of   the hash functions parameters  
    that result in an $( \eps, \sigma)$- SKA 
    are given in Theorem \ref{main}.  We prove the theorem for the case that the distribution $P_{X^n Y^n Z^n} = \Pi_j P_{X_j Y_j Z_j}$ is due to independent experiments which are not necessarily identical.  Corollary~\ref{cor:iid} shows that the resulting bound from the theorem can be tightened for IID case.

    \begin{algorithm}[t]
	    \DontPrintSemicolon
	    \caption{$\mathbf{\Pi}_{\mathrm{SKA}}$: A Capacity-achieving SKA 
	    }
	    \label{alg: 1wayI}
	    \SetKwInput{PorKnw}{\small{
	    Public Information}}
	    \SetKwInput{PorAsm}{Assumption}
	    \SetKwInput{PorPar}{Parameter}
	    \PorKnw{$P_{XYZ}$}
	    \KwIn{
	    $n-$fold samples 
	    $x^n\in\mc X^n$ and $y^n\in\mc Y^n$, $\epsilon$, $\sigma$.} 
	    \KwOut{Key estimates $ k_A$ and  $ k_B$.}
	    \SetKwFor{MyFor}{for}{}{end~for}
	    \SetKwIF{If}{ElseIf}{Else}{if}{then}{else if}{else}{end~if}
	
	    \BlankLine
	
	    \tcp*[h]{
	    Initiation Phase}	\BlankLine
        \textbullet~ Alice and Bob, (i) 
         find and share $\lambda$,  and $\ell $ and $t$ for the hash functions $h_s: \mc X^n\to \{0,1\}^t$ and $\hat h_{s'}: \mc X^n\to \{0,1\}^\ell$, (ii) generate and share the
        seeds $s\in\mc S$ and $s'\in \mc{S}'$ for the hash function.
	
	
	    \BlankLine
	    \tcp*[h]{Information Reconciliation Phase}	\BlankLine
	
	    \nl
        Alice sends the hash value $v=h_s(x^n)$ to Bob.
        
	    \nl
	    Bob forms a list of guesses for $x^n$, 
	   \begin{equation} \label{vectorbound}
	   \mathcal{T}(X^n|y^n) = \{\hat{x^n} :-\log P_{X^n|y^n} (\hat{x^n}|y^n) \leq \lambda\}.
	   \end{equation}

	    \nl
        Bob finds, if exists, a unique $\hat{x^n}\in \mathcal{T}(X^n|y^n)$ such that $h_s(\hat{x^n}) = v$.
    	
	    \nl
        \If{no $\hat{x^n}$ was found \textbf{or} $\hat{x^n}$ was not unique}{Abort the protocol.}
    
        \BlankLine
	    \tcp*[h]{Key Extraction Phase}	\BlankLine
		
	    \nl
	    Alice computes  $ k_A=\hat{h}_{s'}(x^n)$. 
		
	    \nl
	    Bob computes $ k_B=\hat{h}_{s'}(\hat{x^n})$.
	
    \end{algorithm}

\begin{theorem}\label{main}
Let the source model $(X^n,Y^n,Z^n)$  be described by a joint distribution $P_{X^n Y^n Z^n} = \Pi_j P_{X_j Y_j Z_j}$.  
For  any pair of real numbers $(\eps, \sigma)$ where $0<\eps,\sigma<1$,  
  $\Pi_{\mathrm{SKA}}$ 
Protocol~\ref{alg: 1wayI} gives  
an $(\eps,\sigma)$-SK with maximum key length of, 
\begin{align}\nonumber
            &\max(\ell_{\eps,\sigma}^{\mathbf{\Pi}_{\mathrm{SKA}}}) = H(X^n|Z^n)-H(X^n|Y^n)\\ 
            &\qquad\qquad    -\sqrt{n}f_{\eps,\sigma}(|\mc X|) 
             - \log \frac{4n^3}{\eps\sigma^2} + \mc O(\frac{1}{\sqrt{n}}),
        \end{align}
	where $f_{\eps,\sigma}(|\mc X|) =\sqrt{2}\log(|\mc X|~+~3)\left( \sqrt{\log 1/\eps} + \sqrt{\log 2/\sigma} \right)$.
\end{theorem}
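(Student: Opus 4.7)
The plan is to bound separately the probability that Bob fails to recover Alice's sample (the reliability error) and the statistical distance of the extracted key from uniform given Eve's view (the security parameter), and then to choose $\lambda$, $t$ and $\ell$ so that the two quantities are respectively at most $\epsilon$ and $\sigma$; the largest admissible $\ell$ under these two constraints will be the claimed key length.

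For reliability, Bob's output $\hat{x}^n$ can differ from $x^n$ only in the disjoint events (i) $x^n \notin \mathcal{T}(X^n|y^n)$, i.e.\ $-\log P_{X^n|Y^n}(x^n|y^n) > \lambda$, and (ii) some $\tilde{x}^n \in \mathcal{T}(X^n|y^n)\setminus\{x^n\}$ satisfies $h_s(\tilde{x}^n) = h_s(x^n)$. Because $P_{X^n Y^n Z^n}$ factorizes, $-\log P_{X^n|Y^n}(X^n|Y^n) = \sum_{j=1}^n -\log P_{X_j|Y_j}(X_j|Y_j)$ is a sum of independent (not necessarily identically distributed) random variables with mean $H(X^n|Y^n)$ and with coordinate variance and third absolute moment both dominated in terms of $\log^2(|\mc{X}|+3)$ (the standard Hayashi-type bound on the log-density of a distribution on an alphabet of size $|\mc{X}|$). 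Applying Lemma~\ref{lem:Berry-Esseen} and setting $\lambda = H(X^n|Y^n) + \sqrt{2n\log(2/\epsilon)}\,\log(|\mc{X}|+3) + \mc{O}(1/\sqrt{n})$ bounds (i) by $\epsilon/2$. For (ii) I would use that every $\tilde{x}^n \in \mathcal{T}(X^n|y^n)$ satisfies $P_{X^n|Y^n}(\tilde{x}^n|y^n) \ge 2^{-\lambda}$, whence $|\mathcal{T}(X^n|y^n)| \leq 2^\lambda$; the $2$-UHF property then yields an expected collision count at most $2^{\lambda - t}$, so choosing $t = \lambda + \log(2/\epsilon)$ keeps (ii) below $\epsilon/2$.

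For security I would apply the average-case Leftover Hash Lemma to $\hat{h}_{S'}$ with Eve's side information $W = (S, S', V, Z^n)$:
\begin{equation*}
\sd\bigl((\hat{h}_{S'}(X^n), W);(U_\ell, W)\bigr) \leq \tfrac{1}{2}\sqrt{2^{\ell - \tilde{H}_\infty(X^n \mid W)}}.
\end{equation*}
Independence of the seeds $S, S'$ from $(X^n, Z^n)$ together with the fact that $V$ has $t$ bits yield $\tilde{H}_\infty(X^n|W) \ge \tilde{H}_\infty(X^n|Z^n) - t$. The key bridge from min-entropy to Shannon entropy is a smoothing step: a second Berry-Esseen argument applied to the independent sum $-\log P_{X^n|Z^n}(X^n|Z^n)$ shows that, up to a $\sigma/2$-perturbation of $P_{X^n Z^n}$, one has $\tilde{H}_\infty(X^n|Z^n) \ge H(X^n|Z^n) - \sqrt{2n\log(2/\sigma)}\,\log(|\mc{X}|+3) - \mc{O}(1/\sqrt{n})$. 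Requiring the right-hand side of the LHL bound to be at most $\sigma/2$ gives the security constraint $\ell \leq H(X^n|Z^n) - \sqrt{2n\log(2/\sigma)}\,\log(|\mc{X}|+3) - t - 2\log(2/\sigma) - \mc{O}(1/\sqrt{n})$.

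Combining the two constraints, substituting the chosen $t$ and $\lambda$, and collecting the two independent $\sqrt{n}$ contributions gives $\sqrt{n}\,\sqrt{2}\log(|\mc{X}|+3)\bigl(\sqrt{\log(1/\epsilon)}+\sqrt{\log(2/\sigma)}\bigr) = \sqrt{n}\,f_{\epsilon,\sigma}(|\mc{X}|)$; the residual $\log(2/\epsilon)$, $\log(2/\sigma)$ and Berry-Esseen remainder terms, together with the polynomial overhead that appears when tracking the smoothing error at scale $1/n$, absorb into the $-\log(4n^3/(\epsilon\sigma^2))$ term, which yields the announced expression. I expect the main obstacle to be the smoothing step for $\tilde{H}_\infty(X^n|Z^n)$ in the non-identical setting: producing a tight coordinate-wise Berry-Esseen bound that is uniform across the marginals $P_{X_j Z_j}$ requires Hayashi-type variance and third-moment estimates in exactly the form $\log^2(|\mc{X}|+3)$, and it is this step that dictates both the shape of $f_{\epsilon,\sigma}$ and the precise $\log(4n^3/(\epsilon\sigma^2))$ third-order term.
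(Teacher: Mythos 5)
Your overall architecture matches the paper's: the same two-event reliability decomposition ($x^n\notin\mathcal{T}$ versus a hash collision inside $\mathcal{T}$), the same cardinality bound $|\mathcal{T}(X^n|y^n)|\le 2^{\lambda}$ feeding the 2-UHF collision count, and the same security chain (average-case LHL, the chain rule $\tilde{H}_\infty(X^n|Z^n,V)\ge\tilde{H}_\infty(X^n|Z^n)-t$ for the $t$-bit reconciliation message, then a smoothing step from min-entropy down to Shannon entropy). However, there is a genuine gap in the concentration tool you chose. The paper does \emph{not} use Berry--Esseen here: it invokes the Holenstein--Renner information-spectrum bounds (\cite[Thms.~1 and 2]{Holenstein2011}), which give the sub-Gaussian tail $\Pr\left[-\log P_{X^n|Y^n}(x^n|y^n)>H(X^n|Y^n)+n\delta\right]\le 2^{-n\delta^2/(2\log^2(|\mathcal{X}|+3))}$ directly for independent, non-identical experiments, and the analogous statement for $H^{\eps'}_\infty(X^n|Z^n)$. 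Inverting that tail is exactly what produces the coefficient $\sqrt{2}\log(|\mathcal{X}|+3)\sqrt{\log(1/\cdot)}$ in $f_{\eps,\sigma}$. Your Berry--Esseen route faces three problems you do not resolve: Lemma~\ref{lem:Berry-Esseen} as stated applies only to IID sums, whereas the theorem covers non-identical marginals; Berry--Esseen yields variance-dependent terms $Q^{-1}(\cdot)\sqrt{\Delta}$ plus a remainder $3\rho/(\Delta^{3/2}\sqrt{n})$ that requires a \emph{lower} bound on the variance (which can be arbitrarily small or zero for some coordinates) and is not controlled by $\log^2(|\mathcal{X}|+3)$; and converting $Q^{-1}(\eps)\sqrt{\Delta}$ into $\sqrt{2\log(1/\eps)}\log(|\mathcal{X}|+3)$ needs additional moment estimates you only gesture at. You correctly identify this as the main obstacle, but the fix is to change tools, not to push harder on Berry--Esseen — which is precisely the refinement the paper reserves for the IID-only Corollary~\ref{cor:iid}.

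A secondary issue: your fixed budget split ($\eps/2$ for each reliability event, $\sigma/2$ for smoothing and $\sigma/2$ for the LHL term) yields a $\sqrt{n}$ coefficient containing $\sqrt{\log(2/\eps)}$, not the theorem's $\sqrt{\log(1/\eps)}$; since this discrepancy multiplies $\sqrt{n}$, it is not absorbed by lower-order terms, and your final "collecting the contributions" step is inconsistent with your own choice of $\lambda$. The paper gets the stated coefficients by an $n$-dependent split, $\eps_1=\frac{n-1}{n}\eps$, $\eps_2=\frac{\eps}{n}$, $\eps'=\frac{n-1}{2n}\sigma$, and it is this asymmetric split that simultaneously drives the $\sqrt{n}$ coefficient to $\sqrt{\log(1/\eps)}+\sqrt{\log(2/\sigma)}$ and generates the $-\log\frac{4n^3}{\eps\sigma^2}$ third-order term you are trying to account for.
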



\begin{proof}

{\em Reliability.} 
Bob's recovery algorithm  searches the set $\mathcal{T}(X^n|y^n)$ for vector(s) that their hash value match the received hash value, 
and fails in two cases: (i)    $x$ is not in the set, and (ii) there are more than one vector in the set whose hash value matches the received hash value $v$.
Bob's  failure probability $P_e=\pr{K_A\neq K_B}$ is upper bounded by finding the probabilities of the above two events.
Each event corresponds to the
possible 
samples of Alice, as shown below. 
\begin{eqnarray*}
&&\xi_1= 
\{x^n :-\log P_{X^n|Y^n} (x^n|y^n) > \lambda\} \\
&&\xi_2=\{x^n\in \mc{T}(X^n|y^n) :\exists\ \hat{x^n}\in\mc{T}(X^n|y^n)\ \mathrm{s.t.}~\\
&&\qquad\qquad\qquad\qquad\qquad h_S(\hat{x^n})=h_S(x^n)\}.
\end{eqnarray*}


To bound $\pr{\xi_1}$,   we use the result of \cite[Theorem 2]{Holenstein2011}, for $n$-IID samples of a joint distribution that states:
\[ \ \Pr[- \log P_{X^n|Y^n} (x^n|y^n) > H(X^n|Y^n)+n\delta\} ]\leq \beta. \] 
 for $\beta=2^{-\frac{n\delta^2}{2\log^2(|X|+3)}}$. 
By choosing 
\begin{equation}\label{eq:lambda}
    \lambda=H(X^n|Y^n)+n\delta_1,
\end{equation}
where 
$\delta_1$ satisfies, ${\eps_1}=2^{\frac{-n(\delta_1)^2}{2\log^2(|\mathcal{X}|+3))}}$  for some chosen value of $\eps_1\leq \eps$,
we   have 
$\xi_1=\{x^n :-\log P_{X^n|Y^n} (x^n|y^n) > H(X^n|Y^n)+n\delta_1\},$
and $\pr{\xi_1}\leq {\eps_1}$.

To bound $\pr{\xi_2}$,  we note that for   a ${\hat{x^n}}\in \mc{T}(X^n|y^n)$, 
 the collision probability with any other ${x^n} \in {\cal X}^n$
is bounded by
$\pr{h_S (\hat{x^n} )= h_S(x^n)} \leq 2^{-t}$ (Definition \ref{def: UHF}),  and so the total probability 
that some element  $\mc{T}(X^n|y^n)$  collides with an element in ${\cal X}^n$  is  $|\mathcal{T}(X^n|y^n)|\cdot2^{-t}$.
 That is $$\pr {\xi_2} \leq |\mathcal{T}(X^n|y^n)|\cdot2^{-t}.$$
On the other hand, since the probability of each element of $ \mathcal{T}$ is bounded by   $2^{-\lambda}$, we have 
 $|\mathcal{T}(X^n|y^n)|. 2^{-\lambda}\leq \pr{\mathcal{T}(X^n|y^n))}\leq 1$, and we have $|\mathcal{T}(X^n|y^n)|\leq 2^{\lambda}$.  
Let $t=\lambda-\log{\eps_2}$  for some chosen value of $\eps_2\leq \eps$. Then we have
$\pr{\xi_2}\leq {\eps}_{2}$. That is
\begin{equation}\label{eq:t}
    t=H(X^n|Y^n)+n\delta_1-\log\eps_2
\end{equation}

The above shows that the  choice  of $\eps_1$ determines $\delta_1$, and then $\lambda$, which together with the choice of $\eps_2$, determines $t$.
Finally,  $\eps = \eps_1+\eps_2$  and  $P_e\leq \eps$.
Equation (\ref{eq:t}) clearly shows the relation between $t$ and the error probability:  smaller $\eps_1$ and $\eps_2$ 
give 
 larger $t$. This is expected as larger $t$ 
provides  more information about $X^n$ to Bob for reconciliation.   We also note that larger information about $X^n$ reduces achievable key length of the protocol.


\noindent
{\em Key secrecy.}
%
To show the secrecy of the key, we need to bound the statistical distance of the joint distribution of the derived key and the adversary's information,  from 
the joint distribution of the uniform distribution and the adversary's information. 

According to the Lemma~\ref{smooth-hash} in the appendix, we have 
\begin{align}\nonumber
  &\sd((\hat{h}_{S'}({X^n}),h_S({X^n}),S',S,{Z^n});\\\label{eq:sigma-sec-smooth}
  &(U_\ell,h_S({X^n}),S',S,{ Z^n}))\leq 2\eps'+\frac{1}{2}\sqrt{2^{t+\ell-\tilde{H}^{\eps'}_\infty(X^n|Z^n)}},
\end{align}
We now note the following:

(ii) The relation between the smooth average min-entropy and the smooth conditional min-entropy is  given in \cite[Appendix B]{Dodis2008}, and  states $\tilde{H}^{\eps'}_\infty(X^n|Z^n)\geq H_{\infty}^{\eps'}(X^n|Z^n)$.

(iii) Using \cite[Theorem 1]{Holenstein2011},  we have $H_{\infty}^{\eps'}(X^n|Z^n)\geq H(X^n|Z^n)-n\delta'$, with $\eps'=2^{\frac{-n\delta'^2}{2\log^2(|\mathcal{X}|+3)}}$. Therefore, we can substitute $\tilde{H}^{\eps'}_\infty(X^n|Z^n)$ in (\ref{eq:sigma-sec-smooth}) with $H(X^n|Z^n)-n\delta'$:
\begin{align*}
  \sd((&\hat{h}_{S'}({X^n}),h_S({X^n}),S',S,{Z^n});\\&(U_\ell,h_S({X^n}),S',S,{Z^n}))\leq 2{\eps'}+\frac{1}{2}\sqrt{2^{t+\ell-{H}(X^n|Z^n)+n\delta'}},  
\end{align*}
where $\eps'=2^{\frac{-n\delta'^2}{2\log^2(|\mathcal{X}|+3)}}$ and thus 
$
 \delta'=\sqrt{\frac{(\log\frac{1}{\eps'})(2\log^2(|\mathcal{X}|+3))}{n}}.
$
To satisfy $\sigma$-secrecy of the key agreement protocol it is sufficient to have 
$2{\eps'} +\frac{1}{2}\sqrt{2^{t+\ell-{H}(X^n|Z^n)+n\delta'}} \leq \sigma.$
Thus, the 
 maximum achievable key length of the 
protocol must satisfy,
\begin{align*}
&\sqrt{2^{t+\ell-{H}(X^n|Z^n)+n\delta'}}\leq 2(\sigma-2{\eps'})\\
&\quad \Rightarrow \ell \leq H(X^n|Z^n)-n\delta'-t+2+2\log(\sigma-2\eps')
\end{align*}

To have error probability bounded by $\epsilon$, 
we will use $t$ from (\ref{eq:t}) and obtain the 
maximum achievable key length with $(\eps, \sigma)$ parameters, as 
\begin{align}\label{bound1}
\ell \leq& H(X^n|Z^n)-H(X^n|Y^n)+2+\log \eps_2(\sigma-2{\eps'})^2 \nonumber\\
&-\sqrt{2n} \log(|\mathcal{X}|+3)(\sqrt{(\log \frac{1}{\eps'})}+\sqrt{(\log \frac{1}{\eps_1})}).
\end{align}
We will then optimize the bound by choosing appropriate values for $\eps'$,  $\eps_1$ and  $\eps_2$.

$\eps'$ is the smoothing parameter and can  be  chosen arbitrarily  subject to satisfying $\eps'\leq\frac{\sigma}{2}$. 
The effect of $\eps'$ on the upper bound  (\ref{bound1}) appears  in two terms:
larger values of $\eps'$,  in $\log(\sigma-2\eps')$ reduce the RHS of the bound, and  
 as part of the coefficient of the term $-\sqrt{n}$, increase the RHS of the bound.
 However the effect of the latter will be multiplied by 
  the $\sqrt{n}$, and so we will choose 
%
 $\eps'=\frac{n-1}{2n}\sigma$.
 
For $\eps_1$ and  $\eps_2$,  we have $\eps_1+\eps_2\leq \eps$.  In the bound  (\ref{bound1}) $(\log \frac{1}{\eps_1})$ is the coefficient of $\sqrt{n}$, and  its larger values correspond  to larger value of  the RHS of the bound. $\eps_2$ however appears within a constant term.
We thus choose $\eps_1=\frac{n-1}{n}\eps$ and $\eps_2=\frac{\eps}{n}$. 
These are reasonable choices for bounding error probabilities of $\xi_1$ and $\xi_2$:
$\xi_1$ is the error of  $x^n$ being outside of $\mc T(X^n|y^n)$ while $\xi_2$ is related to the collision probability of the hash function,
 and is expected to be much smaller than $\xi_1$.
\remove{and and the choice of $\eps_1$
For reliability parameters we see larger $\eps_1$ and $\eps_2$ corresponds to larger key length but since $\eps_1+\eps_2\leq \eps$, the increase of one decreases the other one. We let $\eps_1$ to be much larger than $\eps_2$ because $(\log \frac{1}{\eps_1})$ is the coefficient of $\sqrt{n}$, while $\log \eps_2$ is a constant term. Therefore, we set $\eps_1=\frac{n-1}{n}\eps$ and $\eps_2=\frac{\eps}{n}$. This choice is reasonable with regards to the initial definition of $\xi_1$ and $\xi_2$ because $\xi_1$ is the error that happens when $x^n$ is outside of $\mc T(X^n|y^n)$ while $\xi_2$ corresponds to the collision probability of the hash function and is supposed to be much smaller than $\xi_1$.
}
Using these substitutions we have
\begin{align} \nonumber
&\max(\ell_{\eps,\sigma}^{\mathbf{\Pi}_{\mathrm{SKA}}}) = H(X^n|Z^n)-H(X^n|Y^n)
 + 2+\log \frac{\eps \sigma^2}{n^3}\\\label{maxkey} 
 &-\sqrt{2n} \log(|\mathcal{X}|+3) \big(\sqrt{\log {\frac{n}{(n-1)\eps}}}+\sqrt{ \log {\frac{2n}{(n-1)\sigma}}}\big). 
\end{align}

Since $\sqrt{\log\frac{an}{(n-1)b}} = \sqrt{\log\frac{a}{b}} + \mc O(1/n)$, we have
\begin{align}\nonumber
            &\max(\ell_{\eps,\sigma}^{\mathbf{\Pi}_{\mathrm{SKA}}}) = H(X^n|Z^n)-H(X^n|Y^n) \\ \label{eq:OM-length-2}  &\qquad\qquad -\sqrt{n}f_{\eps,\sigma}(|\mc X|) 
             - \log \frac{4n^3}{\eps\sigma^2} + \mc O(\frac{1}{\sqrt{n}}),
        \end{align}
which completes the proof.
\end{proof}

\begin{remark}
 The third order term of these bounds can be further improved by choosing $\eps_1,\eps_2,$ and $\eps'$ differently. For example, let $\eps'=\frac{\sqrt[4]{n} - 1}{2\sqrt[4]{n}}\sigma$, $\eps_1 = \frac{\sqrt{n} - 1}{2\sqrt{n}}\eps$, and $\eps_2 = \frac{\eps}{\sqrt{n}}$. Then 
    \begin{align}\nonumber
            &\max(\ell_{\eps,\sigma}^{\mathbf{\Pi}_{\mathrm{SKA}}}) = H(X^n|Z^n)-H(X^n|Y^n)\\  \label{eq:OM-main}
            &\qquad\qquad    -\sqrt{n}f_{\eps,\sigma}(|\mc X|) 
             - \log n + \mc O(1),
    \end{align}
    which follows from the fact that $$\sqrt{\log\frac{a\sqrt{n}}{(\sqrt{n}-1)b}} = \sqrt{\log\frac{a}{b}} + \frac{1}{2\ln{2}\sqrt{n\log\frac{a}{b}}} + \mc O(\frac{1}{n}).$$
\end{remark}

Theorem~\ref{main} gives 
the maximum achievable key length $\ell_{\eps,\sigma}^{\mathbf{\Pi}_{\mathrm{SKA}}}$ 
of the protocol and 
provides a lower bound on the maximum key length of OW-SKA protocols.
That is,
 $S_{\eps,\sigma}^{ow}(X^n,Y^n|Z^n)\geq \max(\ell_{\eps,\sigma}^{\mathbf{\Pi}_{\mathrm{SKA}}})$. 
Next corollary tightens the lower bound 
for  IID sources, 
using  Berry-Essen inequality \cite{berry1941accuracy,esseen1945fourier}. 
    \begin{corollary}\label{cor:iid}
        For any source model described by IID distribution $P^n=\Pi_j P_{X_j Y_j Z_j}$ we have 
        \begin{align}\nonumber
            S_{\eps,\sigma}^{ow}(X^n,Y^n|Z^n) &\geq 
            n(H(X|Z) - H(X|Y)) \\ \label{eq:OM-length-1}  &\qquad -\sqrt{n}g_{\eps,\sigma} - \frac{3}{2}\log n + \mc O(1),
        \end{align}
        where $$g_{\eps,\sigma} = Q^{-1}(\eps)\sqrt{\Delta_{X|Y}} + Q^{-1}(\frac{\sigma}{2})\sqrt{\Delta_{X|Z}},$$
    and $\Delta_{U|V} = \myvar{-\log P_{U|V}}$. 
    
    \end{corollary}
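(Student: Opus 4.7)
The plan is to rerun the proof of Theorem~\ref{main} step by step, replacing the two uses of the Hoeffding-type concentration inequality from \cite[Theorem~2]{Holenstein2011} (which gives the cruder coefficient $\sqrt{2}\log(|\mc X|+3)$) with the Berry--Esseen approximation of Lemma~\ref{lem:Berry-Esseen}, since the IID hypothesis $P^n = \Pi_j P_{X_j Y_j Z_j}$ now makes $-\log P_{X^n|Y^n}(X^n|Y^n) = \sum_{j=1}^n -\log P_{X_j|Y_j}(X_j|Y_j)$ a sum of IID random variables with mean $H(X|Y)$ and variance $\Delta_{X|Y}$, and likewise for the $(X,Z)$-marginals.

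First I would revisit the reliability analysis. Applying Berry--Esseen to the bad-typicality event $\xi_1$, I can choose the decoding threshold $\lambda = nH(X|Y) + \alpha_1\sqrt{n\Delta_{X|Y}}$ and obtain $\Pr[\xi_1] \leq Q(\alpha_1) + c_1/\sqrt{n}$ for a distribution-dependent constant $c_1$. Setting $Q(\alpha_1) + c_1/\sqrt{n} = \epsilon_1$, so $\alpha_1 = Q^{-1}(\epsilon_1 - c_1/\sqrt{n}) = Q^{-1}(\epsilon_1) + O(1/\sqrt{n})$ by smoothness of $Q^{-1}$, yields $\lambda = nH(X|Y) + Q^{-1}(\epsilon_1)\sqrt{n\Delta_{X|Y}} + O(1)$. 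The bound on $\Pr[\xi_2]$ from collisions of the 2-UHF is unchanged, so $t = \lambda - \log \epsilon_2$ as before.

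Next, for the privacy-amplification step, the crude bound $\tilde{H}^{\epsilon'}_\infty(X^n|Z^n)\geq H(X^n|Z^n) - n\delta'$ used in the proof of Theorem~\ref{main} is replaced by the IID refinement, again obtained via Berry--Esseen applied to $-\log P_{X^n|Z^n}(X^n|Z^n)$, namely $H_\infty^{\epsilon'}(X^n|Z^n) \geq nH(X|Z) - Q^{-1}(\epsilon')\sqrt{n\Delta_{X|Z}} - O(\log n)$ (together with $\tilde H_\infty^{\epsilon'} \geq H_\infty^{\epsilon'}$ from \cite{Dodis2008}). Plugging these two refinements into the secrecy bound (\ref{eq:sigma-sec-smooth}) and solving for $\ell$ gives
\begin{align*}
\ell \leq\ & n\bigl(H(X|Z)-H(X|Y)\bigr)
  - \sqrt{n}\bigl(Q^{-1}(\epsilon_1)\sqrt{\Delta_{X|Y}}+Q^{-1}(\epsilon')\sqrt{\Delta_{X|Z}}\bigr) \\
 & + 2 + \log\bigl(\epsilon_2(\sigma-2\epsilon')^2\bigr) + O(\log n).
\end{align*}

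Finally I would optimize the free parameters. Writing $\epsilon_1 = \epsilon - c/\sqrt{n}$, $\epsilon_2 = c/\sqrt{n}$, and $\epsilon' = \sigma/2 - c'/\sqrt{n}$ for constants $c,c'>0$, one has $Q^{-1}(\epsilon_1) = Q^{-1}(\epsilon) + O(1/\sqrt{n})$ and $Q^{-1}(\epsilon') = Q^{-1}(\sigma/2) + O(1/\sqrt{n})$, so the $\sqrt{n}$-coefficient collapses to exactly $g_{\epsilon,\sigma}$ plus an $O(1)$ remainder. The constant-plus-log term becomes $2 + \log\bigl((c/\sqrt{n})\cdot (2c'/\sqrt{n})^2\bigr) = -\tfrac{3}{2}\log n + O(1)$, which matches the stated third-order term. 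The main technical obstacle is producing the IID smooth-min-entropy lower bound with the correct $\sqrt{n\Delta_{X|Z}}\,Q^{-1}(\epsilon')$ second-order term and only $O(\log n)$ slack: this is where one actually needs a Berry--Esseen-style (rather than Hoeffding-style) concentration for $-\log P_{X^n|Z^n}$, but it is a routine calculation analogous to the AEP refinement used for the reliability step.
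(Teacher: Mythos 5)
Your proposal follows essentially the same route as the paper's proof: replace the Holenstein--Renner concentration bounds in the proof of Theorem~\ref{main} by Berry--Esseen estimates for $-\log P_{X^n|Y^n}$ and for the conditional smooth min-entropy of $X^n$ given $Z^n$, take $\eps_2$ and $\sigma-2\eps'$ of order $1/\sqrt{n}$, and collect the $-\tfrac{3}{2}\log n$ term from $t-\lambda$ together with $\log\bigl(\eps_2(\sigma-2\eps')^2\bigr)$. The only substantive difference is that the paper obtains the expansion $H_\infty^{\delta}(X^n|Z^n)=nH(X|Z)-Q^{-1}(\delta)\sqrt{n\Delta_{X|Z}}+\mc O(1)$ by citing \cite{Hayashi2019} rather than re-deriving it, and that $\mc O(1)$ remainder (as opposed to the $\mc O(\log n)$ slack you allow yourself) is what is actually needed to pin the third-order coefficient at exactly $-\tfrac{3}{2}\log n$ rather than at some unspecified multiple of $\log n$.
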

        \begin{proof}
            We revisit the proof of Theorem~\ref{main}. For reliability we bound the probability of these two events: 
    \begin{eqnarray*}
    &&\xi_1= 
    \{x^n :-\log P_{X^n|Y^n} (x^n|y^n) > \lambda\} \\
    &&\xi_2=\{x^n\in \mc{T}(X^n|y^n) :\exists\ \hat{x^n}\in\mc{T}(X^n|y^n)\ \mathrm{s.t.}~\\
    &&\qquad\qquad\qquad\qquad\qquad
    h_S(\hat{x^n})=h_S(x^n)\}.
    \end{eqnarray*}
    
    Let $W_i = -\log P_{X_i|Y_i}$ and $\lambda = nH(X|Y) + \sqrt{nV_{X|Y}}Q^{-1}(\eps-\theta_n)$, where $\Delta_{X|Y} = \myvar{-\log P_{X|Y}}$, and $\theta_n = \frac{1}{\sqrt{n}} + \frac{3\rho}{V^{3/2}_{X|Y} \sqrt{n}}$. Then by Lemma~\ref{lem:Berry-Esseen},  $\pr{\xi_1} \leq \eps - \frac{1}{\sqrt{n}}$. By choosing $t=\lambda -\log \frac{1}{\sqrt{n}}$, we will get $\pr{K_A\neq K_B} \leq \pr{\xi_1}+\pr{\xi_2} \leq \eps$.

    For the secrecy constraint, we use Lemma~\ref{smooth-hash}. By this leftover hash lemma and noting the fact that $\tilde{H}^{\eps'}_\infty(X^n|Z^n)\geq H_{\infty}^{\eps'}(X^n|Z^n)$, we have $\sqrt{2^{t+\ell - H^{\eps'}_{\infty}(X^n|Z^n) }} \leq 2(\sigma - 2\eps')$, for any $\eps'$. This implies that for $\eta_n = \frac{2}{\sqrt{n}}$ we get
    \begin{align*}
        \ell \leq H^{\frac{\sigma - \eta_n}{2}}_{\infty}(X^n|Z^n) -  t + \log 4\eta_n^2.
    \end{align*}
    From \cite{Hayashi2019} we know that for IID distribution $P_{X^n Z^n}$, $$H^{\delta}_{\infty}(X^n|Z^n) = n H(X|Z) -Q^{-1}(\delta)\sqrt{n\Delta_{X|Z}} + \mc O(1), $$
    where $\Delta_{X|Z} = \myvar{-\log P_{X|Z}}$. Thus, 
    \begin{align*}
        \ell \leq & n(H(X|Z)- H(X|Y) )  \\
        & \sqrt{n}\left( Q^{-1}(\eps - \theta_n)\sqrt{\Delta_{X|Y}} + Q^{-1}(\frac{\sigma - \eta_n}{2})\sqrt{\Delta_{X|Z}} 
        \right) \\
        &- \frac{3}{2}\log n + \mc O(1).
    \end{align*}
    And thus the proof is complete by using Taylor expansions to remove $\theta_n$ and $\eta_n$. 
    \end{proof}   
    \begin{corollary}[
    OW-SK Capacity]
      For  IID  distribution $P_{X^n Y^n Z^n}$, 
        Protocol $\mathbf{\Pi}_{\mathrm{SKA}}$ achieves the OW-SK capacity.   
    \end{corollary}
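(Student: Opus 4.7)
The plan is to combine the finite blocklength bound from Corollary~\ref{cor:iid} with the capacity characterization in \eqref{eq:OW-SK-Cap} via a standard preprocessing argument. The first step is purely asymptotic: divide the lower bound of Corollary~\ref{cor:iid} by $n$ to get
\begin{equation*}
\liminf_{n\to\infty}\frac{1}{n} S^{ow}_{\epsilon_n,\sigma_n}(X^n,Y^n|Z^n) \geq H(X|Z) - H(X|Y),
\end{equation*}
since the $\sqrt{n} g_{\epsilon_n,\sigma_n}$ and $\tfrac{3}{2}\log n$ terms vanish as long as $\epsilon_n,\sigma_n \to 0$ slowly enough that $g_{\epsilon_n,\sigma_n} = o(\sqrt{n})$, which is always achievable. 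So Protocol~\ref{alg: 1wayI} run directly on $X^n$ attains the rate $H(X|Z)-H(X|Y)$, which is already the right-hand side of \eqref{eq:OW-SK-Cap} for the particular choice $U=X$, $V=\mathrm{const}$.

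To match the full supremum in \eqref{eq:OW-SK-Cap}, I would prepend a public-coin preprocessing step. Fix any $P_{UV|X}$ with $X-U-V$; let Alice draw i.i.d.\ samples $(U_i,V_i)$ conditioned on each $X_i$, so that $(U_i,V_i,X_i,Y_i,Z_i)$ is i.i.d.\ with the specified joint. Alice sends $V^n$ in plaintext (still part of her single public message, concatenated with the hash $h_s(U^n)$), and the protocol is then applied to the new source $(U^n,\,\widetilde Y^n,\,\widetilde Z^n)$, where $\widetilde Y^n=(Y^n,V^n)$ is Bob's side information and $\widetilde Z^n=(Z^n,V^n)$ is Eve's. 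This redefined source is i.i.d.\ with marginal $P_{U,YV,ZV}$, so Corollary~\ref{cor:iid} applies verbatim and yields achievable rate
\begin{equation*}
H(U\mid Z,V)-H(U\mid Y,V).
\end{equation*}
Taking the supremum over $P_{UV|X}$ with $X-U-V$ gives exactly $C_s^{ow}(X,Y|Z)$ by the theorem.

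The last thing to verify is that the modified protocol is still \emph{one-message}: Alice's transmission consists of $V^n$ together with the hash $v=h_s(U^n)$ sent in a single round, and Bob's recovery (searching $\mathcal{T}(U^n\mid \widetilde y^n)$ and applying $\hat h_{s'}$) needs no reply. The initiation phase fixes the public seeds $s,s'$ and parameters $\lambda,\ell,t$ (now computed with respect to the new marginals), so no interaction is introduced.

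The main obstacle I expect is minor and purely bookkeeping: verifying that replacing $X$ by $U$ and expanding Bob's/Eve's views with $V^n$ preserves all the conditions used in the proof of Theorem~\ref{main} and Corollary~\ref{cor:iid}, in particular independence across coordinates of $(U_i,V_i,Y_i,Z_i)$ and the bounded alphabet assumption (the bound depends on $|\mathcal{U}|$ rather than $|\mathcal{X}|$, but finiteness suffices for the asymptotic statement). Once that is checked, the capacity result follows immediately from Corollary~\ref{cor:iid} applied to the preprocessed source.
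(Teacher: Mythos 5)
Your proposal is correct and follows essentially the same route as the paper's own proof: establish the base rate $H(X|Z)-H(X|Y)$ from the finite-blocklength bound, then handle the general supremum by having Alice locally generate $(U^n,V^n)$, broadcast $V^n$ as part of her single message, and run the protocol on the source $(U^n,(Y^n,V^n),(Z^n,V^n))$ to reach $H(U|Z,V)-H(U|Y,V)$. The paper presents the same preprocessing argument (crediting Ahlswede--Csisz\'{a}r and Renes et al.), though with less explicit attention than you give to the vanishing of the lower-order terms and the preservation of the one-message property.
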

    %

    \begin{proof}
    If variables $U$ and $V$ can be found for a given source model $(X,Y,Z)$ with distribution $P_{XYZ}$, such that $X-U-V$ holds and $P_{UV|X}$ maximizes $H(U|Z,V)-H(U|Y,V)$, then the protocol  $\mathbf{\Pi}_{\mathrm{SKA}}$ achieves $C_s^{ow}(X,Y,Z)$, i.e., the OW-SK capacity. To prove this, first note that $\mathbf{\Pi}_{\mathrm{SKA}}$ achieves the SK rate of $R=H(X|Z) - H(X|Y)$ for any given source model $(X,Y,Z)$. Due to this protocol, parties first reconcile on $X^n$ and then extract the key from $X^n$ knowing the fact that the adversary has access to a correlated variable $Z^n$. Assume for a given source model $(X,Y,Z)$ the optimal variables $U$ and $V$ can be calculated such that $X-U-V$ holds, and $C_s^{ow}(X,Y,Z) = H(U|Z,V)-H(U|Y,V)$. In the IID regime, parties observe IID variables $(X^n,Y^n,Z^n)$. Thus, in the first step, Alice, who has access to $X^n$, generates $U^n$ and $V^n$. Then she broadcasts $V^n$ over the public channel. Note that now Eve also has access to both side information variables $Z^n$ and $V^n$. After these initial steps, Alice and Bob run Protocol $\mathbf{\Pi}_{\mathrm{SKA}}$ to reconcile on $U^n$. For the reconciliation step Bob uses his side information, i.e., $(Y^n,V^n)$ to find $U^n$. To extract the key from $U^n$, parties know that Eve has access to $(Z^n,V^n)$, thus by performing the appropriate key extraction the final SK rate will be  $H(U|Z,V)-H(U|Y,V)$, which is equal to $C_s^{ow}(X,Y,Z)$. Hence, the proof is complete. This proof is due to \cite{Ahlswede1993}, Section IV. Also see \cite{Renes2013}, Section III.B. 
    \end{proof}
    
    In Figure~\ref{fig:comp} 
     we compare the two lower bounds of \eqref{eq:OM-length-1} and the  lower bound given in \eqref{eq:OM-main} for
 a source model where $X$ is a uniformly distributed binary  variable, $Y=\mathrm{BSC}_p(X)$, 
     and $Z=\mathrm{BSC}_q(Y)$ are obtained as the output of binary symmetric channels on $X$ and $Y$, respectively,  and $\mathrm{BSC}_p(\cdot)$ denotes a binary symmetric channel with crossover probability $p$.
 For this example, the OW-SK capacity is  $C_{s}^{ow} = h_2(p*q) - h_2(p)$, where $p*q = p(1-q) + (1-p)q$, and $h_2(\cdot)$ is the binary entropy given by $h_2 (p) = -p\log p - (1-p)\log(1-p)$.  For $p=0.02,$ $q=0.15$, and $\eps=\sigma=0.05$, we have  $C_{s}^{ow}= 0.5$. The maximum secret key length by $\mathbf{\Pi}_{\mathrm{SKA}}$, achieves this OW-SK capacity. 
    The finite blocklength bounds 
    of \eqref{eq:OM-length-1} and \eqref{eq:OM-main}  are converted to key rate (divided by $n$), and 
are depicted for this example  in Figure~\ref{fig:comp}.  
   The graph shows the bound of \eqref{eq:OM-length-1} is tighter than the bound of \eqref{eq:OM-main}, and is  closer to the capacity. 
   In Section~\ref{sec:comparison},  
   we derive  the bounds associated with the constructions in 
   \cite{holenstein2006strengthening}.  These bounds will not have positive values for the range of $n$ used in Figure~\ref{fig:comp} and so are not included.

\begin{figure}
	\centering
	\includegraphics[width=0.95\linewidth]{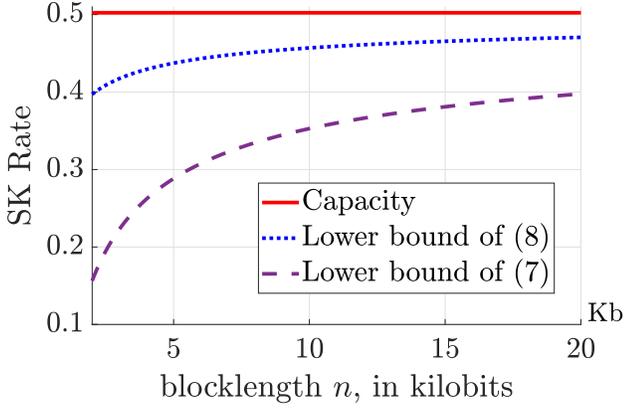}
	\caption{Comparing the finite-length bounds of \eqref{eq:OM-length-1} and \eqref{eq:OM-main}. Here, $P_{X^n Y^n Z^n}$ is IID, $X$ is binary uniform, $Y=\textrm{BSC}_{0.02}(X)$, $Z=\textrm{BSC}_{0.15}(Y)$, reliability and secrecy parameters are $\eps=\sigma=0.05$, and the secret key capacity is $0.5$. The sample's length $n\in[2000,20000]$. For this range of $n$, the bounds of \eqref{eq:LB-HR-1} and  \eqref{eq:LB-HR-2} give SK rate of $0$. In this example, bounds of \eqref{eq:LB-HR-1} and  \eqref{eq:LB-HR-2} are  positive only for $n>10^{7}$, and $n>1.1\times 10^{9}$, respectively.  }\label{fig:comp}
\end{figure}

    
    \begin{corollary}[Public communication cost]\label{cor:pub-mess}
        For the  source model described by distribution $P_{X^n Y^n Z^n} = \Pi_j P_{X_j Y_j Z_j}$, let $F^{om}_{\eps,\sigma}$ denote the
        {\em least} 
        communication cost (in bits) that is needed for a OM-SKA  $S_{\eps,\sigma}^{ow}(X^n, Y^n|Z^n)$.
 \remove{
     public communication protocol that uses the least number of public communication bits $L(F^{ow}_{\eps,\sigma}) =\log|F^{ow}_{\eps,\sigma}|$ for reconciliation which is required for achieving $S_{\eps,\sigma}^{ow}(X^n, Y^n|Z^n)$. 
     }
     Then 
        \begin{align} \label{eq:FL-F}
      F^{ow}_{\eps,\sigma}     \leq H(X^n|Y^n) + \sqrt{n}B_1^\eps(|\mc X|) 
            + \frac{1}{2}\log n + \mc O(1),
        \end{align}
        where $B_1^\eps(|\mc X|) = \sqrt{2}\log(|\mc X|+3)\sqrt{\log\frac{1}{\eps}}$.
        Moreover, for the case of IID source distributions, we have 
        \begin{align} \label{eq:FL-F-2}
          F^{ow}_{\eps,\sigma} \leq
             n H(X|Y) + \sqrt{n}B_2^\eps 
            + \frac{1}{2}\log n + \mc O(1),
        \end{align}
        where $B_2^\eps =  Q^{-1}(\eps)\sqrt{\Delta_{X|Y}}$.
    
    \end{corollary}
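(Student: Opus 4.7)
The plan is to observe that the only variable-length message in $\mathbf{\Pi}_{\mathrm{SKA}}$ is the hash value $v = h_s(x^n)$, which has length exactly $t$ bits (the seeds $s, s'$ are set up during the initiation phase and can be reused or agreed on in advance). Since $\mathbf{\Pi}_{\mathrm{SKA}}$ is itself a valid OM-SKA protocol with parameters $(\eps, \sigma)$, its public communication cost $t$ serves as an upper bound on $F^{ow}_{\eps,\sigma}$. So the entire content of the corollary is to read off $t$ from \eqref{eq:t} under the tightest parameter choices and apply Taylor expansion.

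First I would handle the general (independent experiments) case. From \eqref{eq:t}, $t = H(X^n|Y^n) + n\delta_1 - \log \eps_2$, where $\delta_1 = \sqrt{2\log^2(|\mc X|+3)\log(1/\eps_1)/n}$. Importantly, the reliability of $\mathbf{\Pi}_{\mathrm{SKA}}$ only constrains $\eps_1 + \eps_2 \leq \eps$; the secrecy parameter $\sigma$ plays no role here. To obtain the $\frac{1}{2}\log n$ third-order term I would adopt the refined split from the Remark, namely $\eps_1 = \frac{\sqrt{n}-1}{\sqrt{n}}\eps$ and $\eps_2 = \eps/\sqrt{n}$. Then $n\delta_1 = \sqrt{2n}\log(|\mc X|+3)\sqrt{\log(\sqrt{n}/((\sqrt{n}-1)\eps))}$ and $-\log \eps_2 = \tfrac{1}{2}\log n - \log \eps$. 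Expanding $\sqrt{\log(\sqrt{n}/((\sqrt{n}-1)\eps))} = \sqrt{\log(1/\eps)} + \mc O(1/\sqrt{n})$ as in the Remark then gives exactly \eqref{eq:FL-F}.

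Next I would turn to the IID case and reuse the Berry--Esseen-based choice of $\lambda$ and $t$ from the proof of Corollary~\ref{cor:iid}. There, $\lambda = nH(X|Y) + \sqrt{n\Delta_{X|Y}}\,Q^{-1}(\eps - \theta_n)$ with $\theta_n = \mc O(1/\sqrt{n})$, and $t = \lambda - \log(1/\sqrt{n}) = \lambda + \tfrac{1}{2}\log n$. A first-order Taylor expansion of $Q^{-1}(\eps - \theta_n) = Q^{-1}(\eps) + \mc O(1/\sqrt{n})$ absorbs $\theta_n$ into the $\mc O(1)$ term, yielding
\[
t = nH(X|Y) + \sqrt{n}\,Q^{-1}(\eps)\sqrt{\Delta_{X|Y}} + \tfrac{1}{2}\log n + \mc O(1),
\]
which is \eqref{eq:FL-F-2}.

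There is no real obstacle in this proof; everything is bookkeeping on top of the reliability analyses already done in Theorem~\ref{main} and Corollary~\ref{cor:iid}. The one subtle point to stress is why the public communication of $\mathbf{\Pi}_{\mathrm{SKA}}$ legitimately bounds $F^{ow}_{\eps,\sigma}$: the setup exchange of the hash seeds $s, s'$ is not part of the reconciliation communication cost (it can be performed once, independently of the source, or drawn from a common public coin), so the protocol's operational communication reduces to the single $t$-bit message $v$. This observation, combined with the two parameter choices above, completes the proof.
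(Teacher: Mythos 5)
Your proposal is correct and matches the paper's (implicit) derivation: the corollary is obtained by reading off the hash-output length $t$ from \eqref{eq:t} in the proof of Theorem~\ref{main} (with the refined split $\eps_2=\eps/\sqrt{n}$ needed to get the $\frac{1}{2}\log n$ term) and from the Berry--Esseen choice $t=\lambda+\frac{1}{2}\log n$ in the proof of Corollary~\ref{cor:iid}, and your observation that the seed exchange is excluded from the reconciliation communication count is exactly the convention the paper uses (cf.\ Table~\ref{tab:comp}). No gaps.
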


\section{Comparison with related protocols}\label{sec:comparison}

We compare 
Protocol \ref{alg: 1wayI} with  
other known 
 OM-SKA  protocols. 
 We compare the protocols based on the type of reconciliation, 
  SK length, public communication cost, computation complexity for Alice and 
  Bob, individually.  
  A summary of this comparison is given in Table~\ref{tab:comp}.  We  list these protocols in the following:

\vspace{1mm}
\noindent  \textbf{HR05~} Holenstein and Renner proposed a key agreement method in \cite{Holenstein2006} that achieves the OW-SK capacity of 
{\em a general distribution}. 
The reconciliation message uses 
 linear codes.  We derive  two 
 finite blocklength lower bounds for the two 
 variations of their SKA  \cite{holenstein2006strengthening}, one using a random linear code, and the second a concatenation of a linear code with a Reed-Solomon code.
 The bounds 
 given in Theorem 3.13, and Theorem 3.15 of \cite{holenstein2006strengthening}, are re-derived in the following proposition  as functions of $\eps$ and $\sigma$. In \cite{holenstein2006strengthening}, the bounds are expressed  in the form of $B(n) = n(C_s^{ow}-f_B(\kappa_1,\kappa_2))$, where $n\kappa_1 = \log(1/\eps)$ and $n\kappa_2 = \log(1/\sigma)$. 

\begin{proposition}\label{prop:1}
    For any 
    source model with IID distribution $P_{XYZ}$, let $R_n=H(X^n|Z^n) - H(X^n|Y^n)$.  Then for large enough $n$ and any $\eps,\sigma<1/4$, we have
    \begin{align}\label{eq:LB-HR-1}
        S_{\eps,\sigma}^{ow}(X^n,Y^n|Z^n) &\geq [  R_n - \sqrt{n}f'_{\eps,\sigma} ]^+, \\
        \label{eq:LB-HR-2}
        S_{\eps,\sigma}^{ow}(X^n,Y^n|Z^n) &\geq  [  R_n - \sqrt[4]{n^3}
        g''_{\eps,\sigma} - \sqrt{n}f''_{\eps,\sigma} ]^+,
    \end{align}
    where $[a]^+ = \max\{ 0, a\}$, and 
    \begin{align*}
       f'_{\eps,\sigma} &= 90\log(|\mc X||\mc Y|)(\sqrt{\log 1/\eps} + \sqrt{\log 1/\sigma}), \\
       g''_{\eps,\sigma} &= \sqrt[4]{2^{22}\log(1/\eps)\log^2(|\mc X|)\log^2(|\mc X||\mc Y|)}, \\
       f''_{\eps,\sigma} &= 8\log(|\mc X|)\sqrt{\log(1/\sigma)}.
    \end{align*}
\end{proposition}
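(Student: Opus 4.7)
The plan is to start directly from Theorem 3.13 and Theorem 3.15 of Holenstein's thesis \cite{holenstein2006strengthening}, which give achievable key rates of the form $R(n) = C_s^{ow}(X,Y|Z) - f_B(\kappa_1,\kappa_2)$ expressed in terms of the \emph{rate-normalized} reliability/secrecy parameters $\kappa_1 = \tfrac{1}{n}\log(1/\eps)$ and $\kappa_2 = \tfrac{1}{n}\log(1/\sigma)$. Since both theorems are stated asymptotically in terms of these $\kappa_i$'s (and carry explicit finite constants arising from typical set arguments and the random linear code / Reed--Solomon encoding and decoding), the task is essentially a bookkeeping translation: replace the rate statement with the corresponding achievable key length $n R(n)$, substitute the $\kappa_i$'s back, and collect the $\sqrt{n}$ and $\sqrt[4]{n^3}$ terms. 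Using $R_n = H(X^n|Z^n) - H(X^n|Y^n) = n\,C_s^{ow}$ (for IID sources, with the Markov-less-noisy condition handled as in Corollary~\ref{cor:lessnoisy}) gives the leading-order term in both bounds.

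For \eqref{eq:LB-HR-1}, Holenstein's random linear code construction yields a rate penalty of the form $f_B(\kappa_1,\kappa_2) = c_{\mc X,\mc Y}\bigl(\sqrt{\kappa_1} + \sqrt{\kappa_2}\bigr)$ with $c_{\mc X,\mc Y} = 90\log(|\mc X||\mc Y|)$, once the constants from the concentration inequalities used in the reconciliation analysis and in the leftover-hashing step are propagated. Multiplying by $n$ turns each $\sqrt{\kappa_i}$ into $\sqrt{n\log(1/\cdot)}$, which after factoring $\sqrt{n}$ produces exactly the claimed $\sqrt{n}f'_{\eps,\sigma}$ term. Taking the positive part handles the regime where $\eps,\sigma$ are small relative to $n$, giving the bound $[R_n - \sqrt{n}f'_{\eps,\sigma}]^+$.

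For \eqref{eq:LB-HR-2}, the linear$\circ$Reed--Solomon construction trades a weaker error term in $\eps$ (of order $\sqrt[4]{\kappa_1}$, arising from the outer RS distance analysis) for an improved dependence on $\sigma$. So $f_B$ has the structure $g_{\mc X,\mc Y}\kappa_1^{1/4} + f_{\mc X}\sqrt{\kappa_2}$. Multiplying by $n$ sends $n\kappa_1^{1/4} = n^{3/4}(\log(1/\eps))^{1/4}$ to the $\sqrt[4]{n^3}$ term, while $n\sqrt{\kappa_2}$ again becomes $\sqrt{n\log(1/\sigma)}$. After factoring out $\sqrt[4]{n^3}$ and $\sqrt{n}$ respectively, one recovers $g''_{\eps,\sigma}$ and $f''_{\eps,\sigma}$ with the stated constants. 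The restriction $\eps,\sigma < 1/4$ and ``for large enough $n$'' comes from the regimes under which the underlying concentration and code-distance bounds in \cite{holenstein2006strengthening} are valid (in particular, $\kappa_1,\kappa_2$ need to be small enough that the quadratic/quartic approximations of the rate penalty dominate higher-order error terms).

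The main obstacle will be not the manipulation itself but faithfully tracking Holenstein's implicit constants: Theorems 3.13 and 3.15 are stated with compact symbolic constants whose expansion in terms of $|\mc X|$ and $|\mc Y|$ requires unwinding several lemmas (typicality, hash-collision bound, RS decoding radius). The constants $90$ and $8$ in $f'$ and $f''$, and the $2^{22}$ inside $g''$, are chosen to absorb these contributions along with the conversion factors $n\kappa_i = \log(1/\eps),\log(1/\sigma)$. Once these are fixed, the derivation amounts to algebraic rearrangement and applying $[\,\cdot\,]^+$ so that the statement remains vacuous (but formally true) in the low-$n$ regime where the bound becomes negative.
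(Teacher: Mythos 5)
Your proposal matches the paper's own derivation: the paper likewise obtains Proposition~\ref{prop:1} by taking Theorems 3.13 and 3.15 of \cite{holenstein2006strengthening} (random linear code, and linear concatenated with Reed--Solomon, respectively), which state the achievable rate as $C_s^{ow}-f_B(\kappa_1,\kappa_2)$ with $n\kappa_1=\log(1/\eps)$ and $n\kappa_2=\log(1/\sigma)$, substituting back and multiplying by $n$ so that $n\sqrt{\kappa_i}$ becomes $\sqrt{n\log(1/\cdot)}$ and $n\kappa_1^{1/4}$ becomes $\sqrt[4]{n^3}(\log(1/\eps))^{1/4}$. The constants $90$, $8$, and $2^{22}$ are inherited directly from Holenstein's statements exactly as you describe, so your bookkeeping translation is the intended proof.
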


The first bound \eqref{eq:LB-HR-1} corresponds to random linear codes and second bound is due to concatenated codes.
For both lower bounds \eqref{eq:LB-HR-1} and \eqref{eq:LB-HR-2} the SKA protocol uses $\mc O(n^2)$ bits of communication. The computation complexity of Alice corresponding to both bounds is in $\mc O(n^2)$. The computation complexity of Bob  is in $\mc O (n^2)|\mc X|^n$ and $\mc O (n^2)$, respectively corresponding to  \eqref{eq:LB-HR-1} and \eqref{eq:LB-HR-2}. As it is mentioned in \cite{Holenstein2006,Renes2013}, the computation complexity of \eqref{eq:LB-HR-2} for Alice and Bob is \emph{efficient} (i.e., in $\mc O(n^d)$) but it is not \emph{practically efficient} (i.e., it is not in $\mc O(n)$ or $\mc O(n\log n)$). 

We note that our derived finite-length bounds in \eqref{eq:OM-length-1} and \eqref{eq:OM-main}, are far closer to the capacity upper bound than the finite-length bounds of HR05. For instance, considering the same setting and parameters of the example given for Fig~\ref{fig:comp}, the rate associated with 
\eqref{eq:LB-HR-2} (i.e., bound of \eqref{eq:LB-HR-2} divided by $n$)  will be positive 
 $n>1.1\times 10^{9}$.

\vspace{1mm}
\noindent \textbf{RRS13~} Renes et. al proposed an SKA protocol that used polar codes for both reconciliation and privacy amplification \cite{Renes2013}.  The implementation cost of the protocol is $\mc O(n\log n)$ for both Alice and Bob, but the code construction for any given distribution might 
not be straighforward. 
 The protocol uses a single message of length $\mc O(n)$. Their analysis of the protocol does not  provide 
  finite-length approximation of  the key length. 

\vspace{1mm}
\noindent \textbf{CBA15~} In \cite{Chou2015a},  authors propose an 
SKA protocol using 
 polar codes. The reconciliation and privacy amplification is combined in one step polar coding, and the protocol requires a small pre-shared secret seed  of length $\mc O(2^{-a.n})$. This protocol 
 uses $\mc O(n)$
bit of public communication, and the analysis of \cite{Chou2015a} does not give any finite-length approximations for the final key length.

\remove{
We note that finding finite-length approximations for SKA with polar coding techniques, is an interesting open problem for future work, since the asymptotic computational complexity of these protocols are practically efficient. 

}

\vspace{1mm}
\noindent \textbf{$\mathbf{\Pi}_{\mathrm{SKA}}$ (Protocol \ref{alg: 1wayI})~} This protocol uses universal hash functions for both reconciliation and privacy amplification. This protocol is very efficient in terms of public communication, as it uses a single message of length $\mc O(nH(X|Y))$ (See Corollary~\ref{cor:pub-mess}). 
This protocol gives achievability finite-length bounds as given in \eqref{eq:OM-main} and \eqref{eq:OM-length-1}. 
The computation cost of Alice is practically efficient; i.e., $\mc O(n\log\log n)$ (computing a single hash value). But, unfortunately for Bob, the computation cost is in $\mc O(2^{n.H(X|Y)})$; i.e., the implementation is not efficient.




\begin{table}[]
    \centering
\def\arraystretch{1.2}

\begin{tabular}{|c||c|c|c|c|}
	\hline 
	Protocol  & HR05 & RRS13 \& CBA15 & $\mathbf{\Pi}_{\mathrm{SKA}}$ \\ 
	\hline\hline 
	Coding Method & linear codes & polar codes  &  universal hashes \\ 
	\hline 
	Comm Cost & $\mc O(n^2)$ & $\mc O(n)$  & $\mc O(n)$ \\ 
	\hline 
	Comp A & $\mc O(n^2)$ & $\mc O(n\log n) ^{(**)}$  & $\mc O(n\log\log n)$ \\ 
	\hline 
	Comp B & $\mc O(n^2) ^{(*)}$ & $\mc O(n\log n) ^{(**)}$  & $\mc O(2^{n.H(X|Y)})$ \\ 
	\hline 
	FL Bound & yes & no  & yes  \\ 
	\hline 
\end{tabular} 
    \caption{The comparison of Protocol 1 with other protocols. Comm cost refers to the public communication bits required for the protocol. Comp A and  Comp B refer to the computation complexity of implementation for Alice and Bob, respectively. FL bound stands for finite-length lower (achievablity) bounds. (*) For the lower bound given in \eqref{eq:LB-HR-1} this computation complexity is in $\mc O(2^n)$. (**) Note that for SKAs using polar codes, parties need to compute the index sets that are required for code construction. The exact computation complexity of this step should be considered in addition to the implementation costs. }
    \label{tab:comp}
\end{table}

\section{Conclusion}\label{sec:Concluding remarks} 

We studied  OM-SKA protocols in source model. These protocols are important for practical reasons as they do not require any interaction. They are also important from theoretical view point as they can achieve the secret key capacity of one-way SKA, and also are related to problems in computational cryptography.
Our construction uses a reconciliation method that is inspired by information spectrum analysis of \cite{Hayashi2016}.  Interesting open questions are providing efficient decoding algorithm for Bob, and   refine the reconciliation  to improve the lower bound.  
Obtaining 
finite blocklength bounds for 
SKA using 
polar coding is also an 
 interesting open question 
for future work. 

\vspace{1mm}
\noindent
{\bf Acknowledgement: }This research is in part supported by Natural Sciences and Engineering Research Council of Canada, Discovery Grant program.

\bibliographystyle{IEEEtran}
\bibliography{11.bib}

\IEEEtriggercmd{\enlargethispage{-10cm}}

\subsection{LHL for average smooth min-entropy}

\begin{lemma}\label{smooth-hash} 
Let family $\{h_s|s\in\mathcal{S}\}$ of functions $h_s:\{0,1\}^n\rightarrow\{0,1\}^\ell$ be a 2-UHF.  Then for possibly correlated random variables $X\in\{0,1\}^n$, $V\in\{0,1\}^t$ and $Z\in\{0,1\}^*$,
$$\sd((S,Z,V,h_S(X));(S,Z,V,U_\ell))\leq 2\eps+\frac{1}{2}\sqrt{2^{\ell+t-\tilde{H}_\infty^\eps({X}|Z)}}.$$
\end{lemma}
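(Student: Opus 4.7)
The plan is to combine the standard average-case leftover hash lemma with a smoothing argument and a "leakage" chain rule for average min-entropy. I treat the pair $(Z,V)$ as the joint conditioning side-information and reduce to the known average-case LHL of Dodis--Reyzin--Smith.

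First I would perform the smoothing step. By definition of smooth average min-entropy, there exists a pair $(X',Z')$ with $\sd((X,Z),(X',Z')) \le \eps$ and $\tilde{H}_\infty(X'|Z') \ge \tilde{H}_\infty^{\eps}(X|Z)$. To extend this smoothing so that it is compatible with $V$, I use the fact that $V$ can be viewed as a (possibly randomized) function of $(X,Z)$, namely sampled from $P_{V|X,Z}$. Applying the same randomized mapping to $(X',Z')$ produces $V'$ with $\sd((X,Z,V),(X',Z',V')) \le \sd((X,Z),(X',Z')) \le \eps$, since statistical distance is non-increasing under joint (randomized) post-processing. Including an independent uniform seed $S$ on both sides preserves the same bound.

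Next I apply the average-case leftover hash lemma (Dodis et al.) to $X'$ conditioned on the joint variable $(Z',V')$:
\begin{equation*}
\sd((S,Z',V',h_S(X'));(S,Z',V',U_\ell)) \le \tfrac{1}{2}\sqrt{2^{\ell-\tilde{H}_\infty(X'|Z',V')}}.
\end{equation*}
To bring this back to a condition that only involves $\tilde{H}_\infty^{\eps}(X|Z)$, I invoke the chain rule $\tilde{H}_\infty(A|B,C) \ge \tilde{H}_\infty(A|B) - \log|\supp(C)|$ (also from Dodis--Ostrovsky--Reyzin--Smith). Since $V\in\{0,1\}^t$, this gives
\begin{equation*}
\tilde{H}_\infty(X'|Z',V') \ge \tilde{H}_\infty(X'|Z') - t \ge \tilde{H}_\infty^{\eps}(X|Z) - t.
\end{equation*}

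Finally I assemble the three pieces with the triangle inequality: the left distance $\sd((S,Z,V,h_S(X)),(S,Z',V',h_S(X')))$ is at most $\eps$ by post-processing the coupling with the hash, the middle distance is bounded by the LHL estimate with the entropy lower bound substituted in, and the right distance $\sd((S,Z',V',U_\ell),(S,Z,V,U_\ell))$ is again at most $\eps$ since $U_\ell$ is independent of everything. Summing yields $2\eps + \tfrac{1}{2}\sqrt{2^{\ell+t-\tilde{H}_\infty^{\eps}(X|Z)}}$, as claimed.

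The only subtle step is the extension of the smoothing to the auxiliary variable $V$: one must be careful that the witness $(X',Z')$ for the smooth min-entropy can be coupled with an appropriate $V'$ so that $(X',Z',V')$ is close to $(X,Z,V)$ in statistical distance while still enjoying the chain-rule bound. Once this coupling via the conditional $P_{V|X,Z}$ is in place, the rest is a direct application of known facts.
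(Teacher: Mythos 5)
Your proposal is correct and follows essentially the same route as the paper's own proof: the average-case leftover hash lemma of Dodis et al., the chain rule $\tilde{H}_\infty(X|Z,V)\geq \tilde{H}_\infty(X|Z)-t$ to absorb the $t$-bit side information $V$, and a triangle-inequality smoothing step contributing the $2\eps$. If anything, you are more careful than the paper on the one subtle point---coupling the smoothing witness $(\hat X,\hat Z)$ with a compatible $V'$ via $P_{V|X,Z}$ so that the chain rule still applies---which the paper's final display glosses over by dropping $V$ entirely.
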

\begin{proof}
From the average-case version of LHL in \cite[Lemma 2.3]{Dodis2008} we have
\begin{align*}
    \sd((S,Z,V,h_S(X));(S,Z,V,U_\ell))\leq 2\eps+\frac{1}{2}\sqrt{2^{\ell-\tilde{H}_\infty({X}|Z,V)}},
\end{align*}
and then for $V\in \{0,1\}^t$ from \cite[Lemma 2.2(b)]{Dodis2008} we have
$$\tilde{H}_\infty({X}|Z,V)\geq \tilde{H}_\infty({X}|Z)-t,$$ 
and therefore 
\begin{align*}
    \sd((S,Z,V,h_S(X));(S,Z,V,U_\ell))\leq 2\eps+\frac{1}{2}\sqrt{2^{\ell+t-\tilde{H}_\infty({X}|Z)}}.
\end{align*}
Since \small{$\tilde{H}_\infty^\eps(X|Z)=\displaystyle\max_{(\hat{X},\hat{Z}):\sd ((X,Y);(\hat{X},\hat{Z}))\leq \eps}\tilde{H}_\infty(\hat{X}|\hat{Z})$}, \normalsize we have $\sd(Z;\hat{Z})\leq \epsilon$ and 
$\sd(X;\Hat{X})\leq {\eps}$ and therefore, $\sd(h_S(X);h_S(\Hat{X}))\leq {\eps}$. 
We have
\begin{align*}\label{smooth-hash}
&\sd((S,Z,h_S(X));(S,Z,U_\ell))\\
&\qquad \leq \sd((S,Z,h_S(X));(S,Z,h_S(\hat{X}))) \\
&\qquad\qquad+\sd((S,Z,h_S(\Hat{X}));(S,\Hat{Z},h_S(\Hat{X}))\\
&\qquad\qquad+\sd((S,\Hat{Z},h_S(\Hat{X}));(S,Z,U_\ell))\\
&\qquad\leq 2\eps+\frac{1}{2}\sqrt{2^{\ell-\tilde{H}_\infty(\Hat{X}|\Hat{Z})}}.\qedhere
\end{align*}\end{proof}

\end{document}